\documentclass[hidelinks,onefignum,onetabnum]{siamart251216}

\usepackage{lipsum}
\usepackage{amsfonts}
\usepackage{graphicx}
\usepackage{xcolor}
\usepackage{epstopdf}
\usepackage{algorithmic}
\ifpdf

\usepackage{amsmath}
\usepackage{tikz}
\usepackage{mathdots}
\usepackage{yhmath}
\usepackage{cancel}
\usepackage{color}
\usepackage{siunitx}
\usepackage{array}
\usepackage{multirow}
\usepackage{amssymb}
\usepackage{gensymb}
\usepackage{tabularx}
\usepackage{extarrows}
\usepackage{booktabs}
\usetikzlibrary{fadings}
\usetikzlibrary{patterns}
\usetikzlibrary{shadows.blur}
\usetikzlibrary{shapes}
\usetikzlibrary{calc,positioning}

  \DeclareGraphicsExtensions{.eps,.pdf,.png,.jpg}
\else
  \DeclareGraphicsExtensions{.eps}
\fi

\newsiamremark{remark}{Remark}
\newsiamremark{hypothesis}{Hypothesis}
\crefname{hypothesis}{Hypothesis}{Hypotheses}
\newsiamthm{claim}{Claim}
\newsiamremark{fact}{Fact}
\crefname{fact}{Fact}{Facts}

\headers{T-Robinson Spaces: Structure and Recognition}{P. Asenjo, S. Cavero, M. Soto-Gomez and C. Thraves Caro}

\title{T-Robinson Spaces: Structure, Recognition, and Applications to Real Data}

\author{Patricio Asenjo\thanks{Departamento de Ingeniería Informática y Ciencias de la Computación, Facultad de Ingeniería, Universidad de Concepción
  (\email{pasenjo2018@udec.cl}).}
\and Sergio Cavero\thanks{Universidad Rey Juan Carlos, 
  (\email{sergio.cavero@urjc.es}).}
\and Mauricio Soto-Gomez\thanks{AnacletoLAB - Computational Biology and Bioinformatics Lab and Center
for Complexity and Biosystems, Università degli Studi di Milano (\email{mauricio.soto@unimi.it}).}
\and Christopher Thraves Caro\thanks{Departamento de Ingeniería Matemática, Facultad de Ciencias Físicas y Matemáticas, Universidad de Concepción
  (\email{cthraves@udec.cl}).}
  }

\usepackage{amsopn}

\usepackage{pgfplots}
\pgfplotsset{compat=1.18}
\usepackage{xspace}
\usepackage{adjustbox}
\usepackage{hyperref}

\newcommand{\Trob}{$T$-Robinson\xspace}
\newcommand{\Bxy}{\ensuremath{B_{xy}}\xspace}

\ifpdf
\hypersetup{
  pdftitle={T-Robinson Spaces: Structure, Recognition, and Applications to Real Data},
  pdfauthor={P. Asenjo, S. Cavero, M. Soto-Gomez and C. Thraves Caro}
}
\fi

\begin{document}

\maketitle

\begin{abstract}
We study \emph{\Trob spaces}, a tree-based generalization of Robinson spaces in which every path of a compatible tree induces a Robinson subspace. This framework extends the classical notion of Robinsonian representations from linear orderings to tree structures, allowing the modeling of hierarchical and branching data. We establish a complete combinatorial characterization of \Trob  spaces by proving their equivalence with several graph- and hypergraph-theoretic properties. In particular, we show that a dissimilarity space is \Trob if and only if all its level graphs are dually chordal with a common compatible tree. Combined with the characterization of hypertrees established by Brucker~\cite{brucker2005hypertrees}, this yields the equivalent characterization in terms of the associated cluster, ball, and 2-ball hypergraphs being hypertrees. Building upon these structural results, we develop a  recognition algorithm with complexity \(O(K n^{2})\), where \(K\) denotes the number of minimum spanning trees of the dissimilarity space, improving upon existing hypertree-based approaches whenever \(K\) remains moderate. We further introduce a quantitative measure of \Trob structure that evaluates the extent to which an arbitrary dissimilarity space admits a tree-like representation. Finally, we discuss applications to real-world datasets, illustrating how \Trob spaces provide an interpretable framework for analyzing and organizing relational data.
\end{abstract}

\begin{keywords}
Dissimilarity Spaces, Robinson Spaces, \Trob spaces, Dually chordal graphs, Hypertrees.
\end{keywords}

\begin{MSCcodes}
68R12, 68Q25, 68R10
\end{MSCcodes}

\section{Introduction}

The exponential growth of relational data has generated an increasing demand for mathematical tools capable of representing, organizing, and analyzing complex dissimilarity structures. Pairwise dissimilarity information naturally arises in a wide variety of domains, including bioinformatics, phylogenetics, pangenome analysis, network science, clustering, and information retrieval \cite{costa2020dissimilarity,liu2006clustering,penner2011sequence}. In many of these applications, the available information consists not of coordinates in a geometric space, but rather of pairwise comparisons or dissimilarity measurements between objects. Understanding how such data can be represented in a structured and interpretable way is, therefore, a fundamental problem at the intersection of mathematics, computer science, and data analysis.

One of the most influential frameworks for representing dissimilarity data is provided by Robinson spaces. Originating in the context of seriation, Robinson spaces capture the principle that similar objects should appear close together in a suitable ordering, while dissimilar objects should appear farther apart \cite{liiv2010seriation}. Formally, a Robinson ordering arranges the objects along a line so that dissimilarities increase monotonically when moving away from the diagonal of the corresponding dissimilarity matrix. This simple idea has led to a rich mathematical theory connecting graph representations, combinatorial optimization, metric geometry, and algorithm design \cite{brucker2005hypertrees,caraux2005permutmatrix,chen2004matrix}.

Despite their success, Robinson spaces suffer from an inherent limitation: they are fundamentally one-dimensional. Many datasets exhibit a hierarchical or branching structure that cannot be adequately represented by a linear ordering \cite{henderson2013using}. Phylogenetic relationships, taxonomic classifications, evolutionary histories, and numerous network datasets often exhibit a tree-like structure \cite{abu2016metric}, for which a single linear arrangement fails to capture the underlying geometry. This observation naturally raises the question of whether the Robinson paradigm can be extended from linear representations to tree representations.

In this work, we study \emph{\Trob spaces}, a class of dissimilarity spaces admitting a tree-based Robinson representation.
This notion generalizes classical Robinson spaces by replacing the underlying line with a tree, while preserving the fundamental principle that dissimilarities should increase as one travels along the structure of the representation. 
%
%
By treating  a dissimilarity matrix as the weighted adjacency matrix of a complete graph,
we provide a structural characterization of \Trob spaces, establishing a precise connection between these spaces and several well-studied combinatorial structures, including dually chordal graphs and hypertrees. These connections reveal a rich underlying combinatorial framework and allow us to derive necessary and sufficient conditions for the existence of a \Trob representation.

Building upon this characterization, we develop a recognition algorithm for \Trob spaces. Existing approaches rely on hypertree recognition, leading to algorithms with complexity \(O(n^5)\) \cite{brucker2005hypertrees}. By exploiting the structural properties uncovered in our characterization, we obtain a recognition algorithm with complexity \(O(Kn^2)\), where \(K\) denotes the number of distinct minimum spanning trees of the input dissimilarity space. This significantly reduces the complexity of the recognition problem whenever the number of minimum spanning trees remains moderate, while simultaneously providing a deeper understanding of the relationship between \Trob spaces and their associated spanning-tree structures.

Beyond exact recognition, we introduce a quantitative measure of \Trob structure that evaluates how closely an arbitrary dissimilarity space resembles a \Trob space. This measure allows us to move beyond the binary question of recognition and to investigate the extent to which tree-Robinson structure is present in real-world datasets. We apply this methodology to several real-data collections, providing the first empirical study of the \Trob structure in practical applications.

\section{Definitions}
Let $X$ be a finite set. A \emph{dissimilarity} on $X$ is a symmetric function
\(
D:X\times X\to \mathbb{R}_{\ge 0}
\)
such that $D(x,y)=0$ if $x=y$. The pair $(X,D)$ is called a
\emph{dissimilarity space}. Given $Y\subseteq X$, the restriction
$(Y,D|_{Y\times Y})$ is the \emph{subspace} of $(X,D)$ induced by $Y$. By abuse of notation, we will often denote this subspace simply by $(Y,D)$.

A dissimilarity space $(X,D)$ is \emph{Robinson} if there exists a total order
$<$ on $X$ such that, for every triple $x<y<z$,
\[
D(x,z)\ge \max\{D(x,y),D(y,z)\}.
\]
Such an order is called a \emph{compatible order}.

The objective of this study is to generalize the concept of Robinson spaces from linear representations to tree representations. This motivates the following definition.
\begin{definition}[\Trob]\label{def:trob}
Let $(X,D)$ be a dissimilarity space.  
We say that $(X,D)$ is \emph{\Trob} if there is a tree $T$ with vertex set $X$, such that for every path $P$ of $T$, the subspace
\(
(V(P),D)
\)
induced by $P$ is Robinson with respect to the order induced by \(P\). In this case, $T$
is called a \emph{compatible tree} of $(X,D)$.
\end{definition}

%

It is worth noticing that every Robinson space is \Trob. Indeed, for every compatible ordering of a Robinson space, the path whose vertices are ordered according to the (Robinson) compatible order is a compatible tree. Conversely, not every \Trob space is necessarily Robinson. Figure~\ref{fig:Trobinson_example} depicts an example of a \Trob space which is not Robinson. 

\begin{figure}
    \centering
    \begin{tikzpicture}[
        scale=1.1,
        node/.style={circle, draw=black, thick, fill=white, minimum size=6.5mm, inner sep=0pt},
        tree edge/.style={ultra thick, black},
        valid path/.style={thick, dashed, draw=black},
        invalid path/.style={thick, dashed, draw=red!80!black},
        lbl/.style={midway, fill=white, inner sep=2pt, font=\scriptsize, text=black}
    ]
    \begin{scope}
        \node (D) {$D = \begin{bmatrix} 
0 & 1 & 2 & 3 \\ 
1 & 0 & 1 & 1 \\ 
2 & 1 & 0 & 2 \\ 
3 & 1 & 2 & 0 
\end{bmatrix}
$};
    \end{scope}
        
        \begin{scope}[shift={(5,-.75)}]
            \node[node] (x2) at (0, 1.5) {$x_2$};
            \node[node] (x1) at (-1.8, 0) {$x_1$};
            \node[node] (x3) at (0, 0) {$x_3$};
            \node[node] (x4) at (1.8, 0) {$x_4$};
            
            \draw[tree edge] (x2) to node[lbl] {$1$}  (x1);
            \draw[tree edge] (x2) to node[lbl] {$1$} (x3);
            \draw[tree edge] (x2) to node[lbl] {$1$} (x4);
            
            \draw[valid path, bend right=0] (x1) to node[lbl] {$2$} (x3);
            \draw[valid path, bend right=40] (x1) to node[lbl] {$3$} (x4);
            \draw[valid path, bend right=0] (x3) to node[lbl] {$2$} (x4);
        \end{scope}
    \end{tikzpicture}
    \caption{Example of a \Trob space. The matrix on the left gives the dissimilarity values, while the graph on the right displays a compatible tree, whose edges are drawn as solid lines. Notice that the space is T-Robinson but not Robinson.}
    \label{fig:Trobinson_example}
\end{figure}

To establish our structural characterization of \Trob spaces, we will use several graph- and hypergraph-theoretic notions, which we introduce below.
To this end, we interpret the dissimilarity values as weights on an undirected complete graph defined over the set of elements $X$. 
By an abuse of notation, we refer to both the dissimilarity space and its induced weighted complete graph as $(X, D)$. 
Furthermore, in this graph, we denote the edge $\{x, y\}$ between elements $x$ and $y$ simply as $xy$.

For $\alpha>0$, the $\alpha$-\emph{level graph} of $(X,D)$ is the graph
\(
G_\alpha=(X,E_\alpha)
\),
where $xy$ belongs to $E_\alpha$ if $D(x,y)\le \alpha$. A maximal clique of an $\alpha$-level graph is called a \emph{cluster} of $(X,D)$.
Given $x\in X$ and $r>0$, the \emph{ball} centered at $x$ with radius $r$ is
\[
B_r(x)=\{z\in X:\ D(x,z)\le r\}.
\]
For $x,y\in X$, the \emph{2-ball} induced by $x$ and $y$ is
\[
\Bxy
=
\{z\in X:\ \max\{D(x,z),D(y,z)\}\le D(x,y)\}.
\]

Associated with a dissimilarity space $(X,D)$, we consider the following
hypergraphs with vertex set $X$:
     the \emph{cluster hypergraph}, whose hyperedges are all clusters of
    all level graphs of $(X,D)$;
     the \emph{ball hypergraph}, whose hyperedges are all balls of
    $(X,D)$;
     the \emph{2-ball hypergraph}, whose hyperedges are all 2-balls of
    $(X,D)$.

A hypergraph $\mathcal H=(X,\mathcal E)$ is called a \emph{hypertree} if there
exists a tree $T$ on $X$ such that every hyperedge of $\mathcal H$
induces a subtree of $T$. Such a tree is called an \emph{underlying tree} of
$\mathcal H$.

A graph $G=(X,E)$ is called \emph{dually chordal} if there exists a spanning
tree $T$ of $G$ such that every closed neighborhood of $G$ induces a subtree
of $T$. Equivalently, Brandstat et al. in \cite{brandstadt1998dually} show that $G$ is dually chordal if there exists a
spanning tree $T$ of $G$ such that every maximal clique of $G$ induces a
subtree of $T$. Such a tree is called a \emph{dually chordal compatible tree} of $G$.

\section{State of the art and our results}
Robinson's 1951 introduction of an ordering method for archaeological deposits \cite{robinson1951method} marked a foundational step in understanding structured data. Building on this work, Kendall  \cite{kendall1969incidence} gave the first definition and characterization of Robinson matrices, which are dissimilarity matrices of Robinson spaces written according to a compatible ordering. 

Since then, several recognition methods for Robinson spaces have been studied. Among them, we highlight two optimal polynomial-time algorithms. The first one, due to Pr\'ea and Fortin~\cite{prea2014optimal}, uses PQ-trees, yielding an optimal algorithm with complexity \(O(n^2)\). The second one, due to Carmona et al.~\cite{carmona2024modules}, is also an optimal \(O(n^2)\) algorithm and is based on the study of \emph{mmodules} in Robinson spaces.
The recognition problem becomes substantially more difficult when the dissimilarity matrix is only partially specified; Aracena and Thraves~\cite{aracena2021recognition} showed that recognizing Robinson spaces with missing data is NP-hard.

Approximation problems in Robinson spaces are important because real data often contains missing or flawed information. Barthélemy and Brucker  \cite{barthelemy2001np} studied the problem of approximating a dissimilarity space by a Robinson space through a minimization problem under an \(L^p\) distance, for \(p\in[0,+\infty)\), and proved that this optimization problem is NP-hard. Later, Chepoi et al. \cite{chepoi2009seriation} extended this hardness result to the case \(p=+\infty\).

The notion of hypergraph was introduced by Berge~\cite{berge1985graphs} as a generalization of graphs, allowing one to represent relations that are not necessarily pairwise. Hypertrees constitute a particularly relevant class of hypergraphs for the present work. Flament~\cite{flament1978hypergraphes} established a characterization of hypertrees: a hypergraph \(H=(X,\mathcal E)\) is a hypertree if and only if it satisfies the following two properties. First, it satisfies the \emph{Helly} property, namely, for every subfamily \(\mathcal E'\subseteq \mathcal E\) such that any two distinct hyperedges 
in \(\mathcal E'\) intersect, one has
\(
\cap_{e\in \mathcal E'} e \neq \emptyset
\).
Second, its line graph \(L(H)\) is chordal, where \(V(L(H))=\mathcal E\) and two vertices of \(L(H)\) are adjacent if and only if the corresponding hyperedges intersect in \(H\).

The motivation for studying \Trob spaces comes from Brucker's work~\cite{brucker2005hypertrees} on a particular class of dissimilarities: tree quasi-ultrametrics. Diatta and Fichet~\cite{diatta1994apresjan} define a quasi-ultrametric \(D:X\times X\to \mathbb R_{\ge 0}\) as a function satisfying, for all \(x,y,z\in X\), if
 \(\max\{D(x,z),D(y,z)\}\le D(x,y)\)
then for all  $t$ in  $X$, 
\[D(t,z)\le \max\{D(x,t),D(y,t),D(x,y)\}.\]
This condition is known as the four-point condition. A tree quasi-ultrametric is then a quasi-ultrametric whose cluster hypergraph is a hypertree.  Brucker in \cite{brucker2005hypertrees} proved that hypertrees admit a characterization based on an ordering of their vertices, which also provides a way to identify one of their underlying trees. Moreover, Brucker showed that, for any dissimilarity space, the cluster hypergraph, ball hypergraph, and 2-ball hypergraph are equivalent in the sense that if one of them is a hypertree, then the other two are hypertrees as well. Finally, Brucker provided an algorithm for recognizing whether a hypergraph is a hypertree in time $O(n^3|\mathcal E|)$, where $n=|X|$ is the number of vertices of the hypergraph. 

We show that a dissimilarity space is T-Robinson if and only if its 2-ball hypergraph is a hypertree. We further prove that every compatible tree of a \Trob space is necessarily a minimum spanning tree (MST) of the underlying dissimilarity space, a property that leads to an efficient recognition algorithm. Recently, Préa~\cite{prea2026tree} investigated a stronger notion of \Trob spaces, in which \emph{every} minimum spanning tree is compatible, and proposed an $O(n^3)$-time recognition algorithm for this more restrictive class.

Circular seriation extends the classical seriation problem by replacing linear orders with circular ones. In this direction, Armstrong et al.~\cite{armstrong2021optimal} introduced the notion of \emph{circular Robinson matrices} as the natural analogue of Robinson matrices for circular orders, obtained an optimal \(O(n^2)\) recognition algorithm for strict circular Robinson spaces, and established statistical guarantees on the recovery of the underlying circular order. More recently, Carmona et al.~\cite{carmona2023simple} significantly simplified this recognition problem by proposing an \(O(n\log n)\) algorithm for computing a compatible circular order in the strict setting, complemented with an \(O(n^2)\) verification algorithm. They also proved the equivalence between circular Robinson dissimilarities and pre-circular Robinson dissimilarities, thereby providing a structural characterization of this class.

\subsection*{Our contributions and structure of the document}

 In Section~\ref{sec:characterization}, we develop the structural theory of \Trob spaces. As an intermediate result, we establish in Theorem~\ref{lem:dually-chordal-t-clique} a new characterization of dually chordal graphs. Besides being of independent interest, this characterization plays a central role in proving our main structural result, namely the following characterization of \Trob spaces in terms of dually chordal graphs and hypertrees.

\begin{theorem}\label{thm:characterization}
Let $(X,D)$ be a dissimilarity space and let $T$ be a tree on the vertex set $X$. The following statements are equivalent:
\begin{enumerate}
 \item $(X,D)$ is a \Trob space with compatible tree $T$.
 \item For all $\alpha>0$, the $\alpha$-level graph $G_\alpha$ is dually chordal with compatible tree $T$.
 \item The cluster hypergraph of $(X,D)$ is a hypertree with underlying tree $T$.
 \item The ball hypergraph of $(X,D)$ is a hypertree with underlying tree $T$.
 \item The $2$-ball hypergraph of $(X,D)$ is a hypertree with underlying tree $T$.
\end{enumerate}
\end{theorem}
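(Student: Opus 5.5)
The plan is to prove the cycle through statement 1: show that 1 is equivalent to each of 3, 4 and 5 by direct tree arguments, and that 2 is equivalent to 3 essentially by definition. Throughout, for $u,v\in X$ let $P_T(u,v)$ denote the unique path of $T$ between $u$ and $v$. A vertex set $S$ induces a subtree of $T$ exactly when $S$ is $T$-convex, that is, $P_T(u,v)\subseteq S$ for all $u,v\in S$. Statement 1 unpacks as follows: for every $u,v$ and every $w\in P_T(u,v)$ we have $D(u,w)\le D(u,v)$ and $D(v,w)\le D(u,v)$. This is Robinson-ness along a path applied to its triples, since any triple on a path of $T$ is of this form.

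\emph{Converse directions $3\Rightarrow1$, $4\Rightarrow1$, $5\Rightarrow1$.} Fix $u,v$ and $w\in P_T(u,v)$. For statement 5, the 2-ball $B_{uv}$ always contains $u$ and $v$, so $T$-convexity gives $w\in B_{uv}$, i.e.\ $\max\{D(u,w),D(v,w)\}\le D(u,v)$. For statement 4, apply convexity to the ball $B_{D(u,v)}(u)$, which contains $u$ and $v$; this gives $D(u,w)\le D(u,v)$, and the symmetric ball around $v$ gives $D(v,w)\le D(u,v)$. For statement 3, take $\alpha=D(u,v)$ (or any small $\alpha>0$ if $D(u,v)=0$) and a maximal clique of $G_\alpha$ containing the edge $uv$. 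Convexity puts $w$ in that clique, which gives both inequalities.

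\emph{Forward directions.} For $1\Rightarrow4$: if $z\in B_r(x)$ and $w\in P_T(x,z)$, then $D(x,w)\le D(x,z)\le r$.

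For $1\Rightarrow5$: let $z\in B_{xy}$ and let $m$ be the median of $x,y,z$ in $T$. By symmetry it suffices to handle $w\in P_T(x,z)$.
\begin{itemize}
\item If $w\in P_T(x,m)$, then $w\in P_T(x,y)$, so both $D(x,w)$ and $D(y,w)$ are at most $D(x,y)$.
\item If $w\in P_T(m,z)$, then $w$ lies on both $P_T(x,z)$ and $P_T(y,z)$. Hence $D(x,w)\le D(x,z)\le D(x,y)$ and $D(y,w)\le D(y,z)\le D(x,y)$.
\end{itemize}
The convexity of $B_{xy}$ between $x$ and $y$ themselves follows from the first case.

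For $1\Rightarrow3$: let $C$ be a maximal clique of $G_\alpha$, let $x,z\in C$ and $w\in P_T(x,z)$. For each $c\in C$, let $m$ be the median of $x,z,c$.
\begin{itemize}
\item If $w\in P_T(x,m)$, then $w\in P_T(x,c)$, so $D(w,c)\le D(x,c)\le\alpha$.
\item If $w\in P_T(m,z)$, the same argument with $z$ in place of $x$ gives $D(w,c)\le D(z,c)\le\alpha$.
\end{itemize}
So $w$ is adjacent in $G_\alpha$ to every vertex of $C$, and maximality forces $w\in C$.

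\emph{Equivalence $2\Leftrightarrow3$.} The clusters are exactly the maximal cliques of all the level graphs. So, using the maximal-clique characterization of dually chordal graphs from \cite{brandstadt1998dually} recalled above, statement 2 says precisely that every cluster induces a subtree of $T$, which is statement 3. The one point needing care is that a dually chordal compatible tree is required to be a spanning tree of $G_\alpha$. For small $\alpha$ the graph $G_\alpha$ may be disconnected or miss edges of $T$. I would handle this by observing that each edge $xy$ of $T$ lies in a cluster only when $D(x,y)\le\alpha$, and by reading compatibility for disconnected level graphs through the convention the paper adopts, presumably convexity of maximal cliques, possibly supplemented by the paper's new characterization in Theorem~\ref{lem:dually-chordal-t-clique}. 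I expect this bookkeeping about spanning trees, rather than the convexity arguments, to be the main obstacle.
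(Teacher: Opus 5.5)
Your proposal is correct once the convention in the second paragraph below is made explicit, but it is organized differently from the paper.

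\textbf{Comparison with the paper's route.} The paper proves $1\Leftrightarrow 2$ in Lemma~\ref{lem:level_graphs_dually_chordal}. It uses the T-clique characterization (Theorem~\ref{lem:dually-chordal-t-clique}) in one direction and closed neighborhoods in the other. It proves $1\Leftrightarrow 5$ in Lemmas~\ref{lem:two_balls_paths} and~\ref{lem:two_ball_hypertree}, which is exactly your median argument (their vertex $c$ is your median $m$). It then obtains $3\Leftrightarrow 4\Leftrightarrow 5$, together with the fact that the underlying trees coincide, by citing Proposition~3 of Brucker.

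You instead prove $1\Leftrightarrow 3$ and $1\Leftrightarrow 4$ directly with the same convexity arguments. You get $2\Leftrightarrow 3$ as a restatement, since clusters are by definition the maximal cliques of the level graphs. Your $1\Rightarrow 3$ step is in substance the proof of Theorem~\ref{lem:dually-chordal-t-clique}.

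Your route is self-contained and checks directly that one and the same tree $T$ works for all five statements, rather than importing this from Brucker. The paper's route is shorter and showcases its new characterization of dually chordal graphs.

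One minor point: balls require $r>0$. So in $4\Rightarrow 1$ with $D(u,v)=0$, use a radius below the smallest positive dissimilarity, as you did for clusters.

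\textbf{The spanning-tree issue.} The problem you flag is real and cannot be removed by bookkeeping. If ``dually chordal with compatible tree $T$'' requires $T$ to be a spanning tree of $G_\alpha$, then statement 2 is false whenever some $G_\alpha$ is disconnected. For example, the space of Figure~\ref{fig:Trobinson_example} is $T$-Robinson, but $G_{1/2}$ is edgeless and has no spanning tree.

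The theorem therefore holds only under the relaxed reading: every maximal clique of $G_\alpha$ induces a subtree of the tree $T$ on $X$, with no requirement that $T\subseteq G_\alpha$. The paper tacitly uses the same reading, since Lemma~\ref{lem:level_graphs_dually_chordal} takes $T$ to be a spanning tree of the complete graph $(X,D)$ and never checks $T\subseteq G_\alpha$. You should state this convention explicitly rather than writing ``presumably''; then $2\Leftrightarrow 3$ is literally immediate.

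If you prefer closed neighborhoods as the defining property, as the paper does in proving $2\Rightarrow 1$, the equivalence with clique convexity for the same (not necessarily spanning) $T$ takes two lines:
\begin{itemize}
\item Suppose every $N[c]$ is $T$-convex and $x,z$ lie in a maximal clique $C$. Then each $w\in P_T(x,z)$ lies in $N[c]$ for all $c\in C$, so $w\in C$ by maximality.
\item Conversely, if $u\in N[v]$, then $P_T(v,u)$ lies in a maximal clique containing $u$ and $v$, hence in $N[v]$.
\end{itemize}
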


In Section~\ref{sec:recognition}, we study the recognition problem for \Trob spaces. A key ingredient is Lemma~\ref{thm:mst}, where we prove that every compatible tree of a \Trob space must be a minimum spanning tree of the dissimilarity space. This structural property is crucial for restricting the search space of candidate compatible trees and forms the basis of our recognition algorithm. Using this result, we establish the following theorem.

\begin{theorem}
    \label{thm:rob_recognition}
    Let $(X,D)$ be a dissimilarity space on $n$ elements and let $K$ denote the number of minimum spanning trees of $(X,D)$.
    One can decide whether $(X,D)$ is a \Trob space in $O(K n^2)$ time and $O(n^2)$ space.
\end{theorem}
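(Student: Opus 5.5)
The plan is to enumerate all minimum spanning trees of $(X,D)$ and test each one for compatibility. This works because of Lemma~\ref{thm:mst}, which says every compatible tree of a \Trob space is a minimum spanning tree. Hence $(X,D)$ is \Trob if and only if at least one of its $K$ MSTs is compatible. Accordingly, the proof splits into two parts: (i) enumerating all MSTs in $O(n^2)$ amortized time per tree and $O(n^2)$ total space, and (ii) deciding whether a given tree $T$ is compatible in $O(n^2)$ time and space. Together these give $O(Kn^2)$ time and $O(n^2)$ space.

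For (i), I will use a known MST enumeration scheme on the complete weighted graph with $m=\Theta(n^2)$ edges. One option is Eppstein's or Gabow--Myers style enumeration. Another is a simple exchange-based reverse search: starting from one MST computed by Prim's algorithm in $O(n^2)$, generate all others by swapping a tree edge for a non-tree edge of equal weight that closes a cycle through it. These methods spend $O(m)=O(n^2)$ time per output tree and use $O(m)$ working space. I will store only the current tree and the recursion data, never the whole list of trees, so that space stays $O(n^2)$ independently of $K$.

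For (ii), I will use a local characterization of compatibility that can be checked via rooted traversals. By Definition~\ref{def:trob}, $T$ is compatible iff along every path $x=p_0,p_1,\dots,p_k=y$ the values $D(x,p_i)$ are nondecreasing in $i$. Sufficiency: fix $p_i<p_j<p_l$ on the path; monotonicity from $p_i$ gives $D(p_i,p_l)\ge D(p_i,p_j)$, and monotonicity from $p_l$ backward gives $D(p_i,p_l)\ge D(p_j,p_l)$. Necessity is immediate from the Robinson condition. Equivalently, $T$ is compatible iff for every $x$ and every vertex $y\neq x$ with parent $\pi_x(y)$ in $T$ rooted at $x$, we have $D(x,\pi_x(y))\le D(x,y)$, since monotonicity along consecutive steps yields monotonicity along the whole path. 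To check this, for each root $x$ I run a BFS/DFS of $T$ in $O(n)$ and compare each vertex with its parent. This gives $O(n^2)$ total time per tree and $O(n)$ extra space.

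The main obstacle is step (i): guaranteeing amortized $O(n^2)$ per MST with only $O(n^2)$ space, while avoiding duplicate outputs when ties create many equal-weight exchanges. I would first try citing an established enumeration algorithm that achieves $O(m)$ time per tree in $O(m)$ space. Failing that, I would use a partition-based (Lawler/Murty style) branching: at each node, fix a set of included edges and a set of excluded edges, and compute an MST under those constraints in $O(n^2)$ with Prim. I would discard branches whose constrained MST weight exceeds the optimum. The children of a node are formed by successively excluding the edges of the current tree while including the previous ones, so each leaf yields a distinct optimal tree. The risk is that this costs $O(n^3)$ per tree, and fixing that would require incremental replacement-edge computation in $O(n^2)$ per branching level.
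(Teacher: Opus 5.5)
Your proof is correct and shares the paper's skeleton: Lemma~\ref{thm:mst} reduces recognition to the $K$ minimum spanning trees, which are enumerated and validated one at a time in $O(n^2)$ each. The genuine difference is in the validation step.

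\textbf{Enumeration.} The paper takes essentially your first option. It builds Eppstein's auxiliary graph, whose spanning trees are in bijection with the MSTs of $(X,D)$ and which costs $O(n^2)$ time, and then runs a spanning-tree enumerator on it (Kapoor--Ramesh). So the obstacle in your last paragraph does not arise, and the Lawler--Murty fallback is unnecessary. Two cautions:
\begin{itemize}
\item A plain spanning-tree enumerator such as Gabow--Myers must be run on that auxiliary graph, not on the complete graph, where it would list all $n^{n-2}$ trees.
\item An unstructured search over equal-weight swaps needs a reverse-search parent rule. Otherwise you must store the visited trees, which breaks the $O(n^2)$ space bound.
\end{itemize}

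\textbf{Validation.} The paper's Algorithm~\ref{alg:algtrob} processes all $n(n-1)/2$ tree paths by increasing length. It keeps a table $Y$ of validated subpaths and applies the $O(1)$ test $D(x_1,x_k)\ge\max\{D(x_1,x_{k-1}),D(x_2,x_k)\}$, whose correctness rests on Lemma~\ref{lem:algorithm-validation}. Your test asks that $D(x,\cdot)$ be nondecreasing along $T$ away from every root $x$. It is simpler: $n$ traversals, no path bucketing or DP table, $O(n)$ extra space, and a two-line proof. In effect it is item~4 of Theorem~\ref{thm:characterization}, namely that every ball $B_r(x)$ induces a subtree of $T$.

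What the paper's dynamic programming buys in return is the count $PR_D(T)$ of Robinson paths, which is reused as the objective in Section~\ref{sec:optimization_approach}. Your check only decides compatibility of the whole tree. It does not by itself tell which individual paths are Robinson, since that requires monotonicity from every vertex of the path, not just from its two endpoints.
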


In Section~\ref{sec:optimization_approach}, we move beyond the recognition problem and introduce a quantitative measure of \Trob structure. We then propose a heuristic based on minimum spanning trees that aims to find spanning trees maximizing this measure. Finally, in Section~\ref{sec:experiments}, we evaluate the proposed methodology on real-world datasets and provide, to the best of our knowledge, the first empirical analysis of \Trob structure in practice. Together, these results establish new theoretical foundations for tree-based Robinson representations and provide practical tools for the analysis of relational data.

\section{Characterization}\label{sec:characterization}
In this section, we establish the combinatorial characterization of \Trob spaces, which constitutes the main structural result of this work. Our approach relies on a sequence of equivalences connecting \Trob spaces with graph- and hypergraph-theoretic structures. As a first step, we revisit the class of dually chordal graphs and provide a new characterization in terms of a spanning-tree property.

Let $G$ be a graph and let $T$ be a spanning tree of $G$. We say that $T$ satisfies the \emph{T-clique property} for $G$ if, for every edge $xy\in E(G)$, the vertices of the unique $x$--$y$ path in $T$ induce a clique in $G$.

As we shall see, this property provides a natural bridge between dually chordal graphs and \Trob spaces, and will play a central role in the development of our main characterization theorem.

\begin{theorem}\label{lem:dually-chordal-t-clique}
Let $G=(X,E)$ be a graph. Then,  $G=(X,E)$ is a dually chordal graph with compatible tree $T$ if and only if $T$ satisfies the T-clique property for $G$.
\end{theorem}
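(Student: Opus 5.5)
We prove the two implications separately. In both directions we use the clique formulation of dual chordality recalled in the definitions: $T$ is a dually chordal compatible tree of $G$ exactly when $T$ is a spanning tree of $G$ and every maximal clique of $G$ induces a subtree of $T$. We will also use, for the reverse direction, a Helly-type fact about subtrees of a tree.

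\emph{($\Rightarrow$)} Assume every maximal clique of $G$ induces a subtree of $T$, and let $xy\in E(G)$. Extend the clique $\{x,y\}$ to a maximal clique $C$. Since $C$ induces a connected subtree of $T$, and paths in a tree are unique, the $x$--$y$ path of $T$ lies entirely inside $C$. Any subset of a clique is a clique, so the vertices of this path induce a clique in $G$. This gives the T-clique property; this direction is routine.

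\emph{($\Leftarrow$)} Assume $T$ has the T-clique property for $G$. First, $T$ is a spanning tree of $G$: each edge $uv$ of $T$ lies on the path from $u$ to $v$, and the T-clique property applied to that path\ldots\ requires $uv\in E(G)$ to start with. So we instead obtain this from the hypothesis that $T$ is a spanning tree of $G$, which is part of the definition of the T-clique property.

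Now let $C$ be a maximal clique of $G$. It suffices to show that for any $x,y\in C$ the whole $x$--$y$ path $P_{xy}$ of $T$ lies in $C$; a set closed under tree paths is a subtree. Fix such $x,y$ and a vertex $w\in V(P_{xy})$. By maximality of $C$, it is enough to prove that $w$ is adjacent to every $z\in C$. For each $z\in C$, consider the subtree $S_z$ of $T$ spanned by $x$, $y$ and $z$. This is the union of the three paths $P_{xy}$, $P_{xz}$ and $P_{yz}$, which meet at the median $m$ of $x,y,z$ in $T$. The vertex $w$ lies on $P_{xy}$, which is contained in $P_{xm}\cup P_{my}$. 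Hence $w$ lies on $P_{xz}$ or on $P_{yz}$, since $P_{xz}\supseteq P_{xm}$ and $P_{yz}\supseteq P_{ym}$. In the first case, $xz\in E(G)$ and the T-clique property make $V(P_{xz})$ a clique containing both $w$ and $z$. The second case is symmetric. If $w=z$ there is nothing to prove. Thus $w$ is adjacent (or equal) to every element of $C$. Maximality then forces $w\in C$, since otherwise $C\cup\{w\}$ would be a larger clique.

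The key step, and the one we expect to be the main obstacle, is this median argument: the point is that $P_{xy}\subseteq P_{xz}\cup P_{yz}$ in any tree. Once it is in place, every maximal clique is path-closed, hence induces a subtree of $T$, and $G$ is dually chordal with compatible tree $T$ by the equivalence of Brandst\"adt et al.~\cite{brandstadt1998dually}.
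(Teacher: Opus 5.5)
Your proof is correct and is essentially the paper's argument: the forward direction is identical, and the reverse direction uses the same key fact. That fact is that a vertex $w$ on the $x$--$y$ path of $T$ lies on $P_{xz}$ or on $P_{yz}$ for every $z\in C$, so the T-clique property makes $w$ adjacent to all of $C$ and maximality forces $w\in C$. You argue directly rather than by contradiction, and you justify the tree fact via the median, which the paper only asserts.

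The aborted attempt to derive that $T$ is a spanning tree of $G$ should simply be deleted. As you note, this is already built into the paper's definition of the T-clique property.
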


\begin{proof}
Suppose that $T$ satisfies the T-clique property for $G$. We prove
that $G$ is dually chordal with compatible tree $T$. By the characterization of dually chordal graphs in terms of maximal cliques, it is enough to show that every maximal clique of $G$ induces a subtree of $T$.

Let $C$ be a maximal clique of $G$. Assume, for a contradiction, that $T[C]$ is not connected. Then there exist vertices $x,y\in C$ such that the unique $x$--$y$ path in $T$ contains a vertex $z\notin C$. Since $C$ is a clique, we have $xy\in E$. As  $T$ satisfies the T-clique property for $G$, all vertices of the $x$--$y$ path in $T$ induce a clique in $G$. In particular, $z$ is adjacent to both $x$ and $y$.

We claim that $z$ is adjacent to every vertex of $C$. Indeed, let $w\in C$. Since $C$ is a clique, both $wx$ and $wy$ are edges of $G$. Moreover, as $z$ lies on the $x$--$y$ path in the tree $T$, it follows that $z$ lies on at least one of the paths from $w$ to $x$ or from $w$ to $y$. Applying the T-clique property to the corresponding edge, we conclude that $wz\in E$. Thus $z$ is adjacent to every vertex of $C$, and therefore $C\cup\{z\}$ is a clique of $G$, contradicting the maximality of $C$. Hence $T[C]$ is connected for every maximal clique $C$, and so $G$ is dually chordal with compatible tree $T$.

Conversely, suppose that $G$ is dually chordal, and let $T$ be a spanning tree of $G$ such that every maximal clique of $G$ induces a subtree of $T$. We prove that  $T$ satisfies the T-clique property for $G$. 

Let $xy\in E$. Choose a maximal clique $C$ of $G$ containing both $x$ and $y$. Since $T[C]$ is a subtree of $T$, the unique $x$--$y$ path in $T$ is entirely contained in $C$. As $C$ is a clique of $G$, all vertices of this path induce a clique in $G$. Since the edge $xy$ was arbitrary,  $T$ satisfies the T-clique property for $G$. 
\end{proof}

We now show a fundamental lemma that offers an alternative description of the \Trob property. This reformulation will serve as a key tool for the remainder of the section.
\begin{lemma}\label{lem:path_characterization}
    Let $(X,D)$ be a dissimilarity space. 
    The space $(X,D)$ is \Trob with compatible tree $T$ if and only if for every $x-y-$path, and for every pair of elements $v$ and $w$ in the path it holds $D(v,w)\leq D(x,y)$.
\end{lemma}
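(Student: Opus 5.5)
We prove the two implications separately. In both directions, "path" means a path of $T$, and we write $P_{xy}$ for the unique $x$--$y$ path in $T$.

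\emph{($\Rightarrow$)} Assume $T$ is a compatible tree. Fix $x,y$ and let $v,w$ be vertices of $P_{xy}$. Orient $P_{xy}$ from $x$ to $y$. After possibly swapping $v$ and $w$, the order along the path is $x\le v\le w\le y$, where equalities are allowed.

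If $\{v,w\}=\{x,y\}$, there is nothing to prove. Otherwise, apply the Robinson condition of Definition~\ref{def:trob} to the subspace $(V(P_{xy}),D)$ with its path order. We do this in two stages:
\begin{itemize}
\item For the triple $x<v<y$ (if $v\ne x$), or $x<w<y$ (if $v=x$), we get $D(v,y)\le D(x,y)$, respectively $D(x,w)\le D(x,y)$. In either case, one endpoint of the pair now lies inside $P_{xy}$ and the dissimilarity has not increased.
\item Say we obtained $D(v,y)\le D(x,y)$. Apply the Robinson condition to the triple $v<w<y$ (if $w\ne y$) to get $D(v,w)\le D(v,y)\le D(x,y)$.
\end{itemize}
The remaining cases are symmetric. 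Equivalently, this is the standard fact that in a Robinson order, entries decrease as one moves inward from $(x,y)$ toward the diagonal. Shrinking the interval one endpoint at a time gives $D(v,w)\le D(x,y)$.

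\emph{($\Leftarrow$)} Assume that for every $x,y$ and every $v,w\in V(P_{xy})$ we have $D(v,w)\le D(x,y)$. Let $P$ be any path of $T$ and let $a<b<c$ be three vertices in its order. The subpath from $a$ to $c$ is exactly $P_{ac}$, since paths in a tree are unique, and it contains $b$. Taking $(v,w)=(a,b)$ and then $(v,w)=(b,c)$ in the hypothesis gives $D(a,b)\le D(a,c)$ and $D(b,c)\le D(a,c)$, which is precisely the Robinson inequality. Hence every path of $T$ induces a Robinson subspace with its induced order, so $T$ is compatible.

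The only delicate point is the first direction. We need the Robinson triple condition to propagate to arbitrary inner pairs, which requires the careful two-step inward argument and the handling of coincidences such as $v=x$ or $w=y$. There is also an orientation issue: the "order induced by $P$" is defined only up to reversal. This is harmless because the Robinson condition is symmetric under reversal.
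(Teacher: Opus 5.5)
Your proof is correct and follows essentially the same route as the paper. In the forward direction the paper also chains two Robinson triple inequalities along the path, written as a contradiction, $D(v,w)\le D(x,w)\le D(x,y)$, whereas you shrink from the other end, $D(v,w)\le D(v,y)\le D(x,y)$. The converse is identical, and your remark that the subpath from $a$ to $c$ is exactly $P_{ac}$ makes explicit what the paper leaves implicit in ``since $x$ and $y$ were arbitrary.''
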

\begin{proof} 
Suppose first that $(X,D)$ is \Trob with compatible tree $T$. Assume, for a contradiction, that there exist vertices $x,y,v,w\in X$ such that $v$ and $w$ belong to the $x$--$y$ path of $T$ and \( D(v,w)>D(x,y)\). Since $(X,D)$ is \Trob, every path of $T$ induces a Robinson ordering. Without loss of generality, assume that $v$ lies on the $x$--$w$ path. Since the path is Robinson, we obtain \[ D(x,w)\ge \max\{D(x,v),D(v,w)\}, \] and therefore \( D(x,w)\ge D(v,w)\). On the other hand, because $w$ lies on the $x$--$y$ path, the Robinson property also yields \(D(x,y)\ge D(x,w)\). Combining these inequalities gives \[ D(v,w)>D(x,y)\ge D(x,w)\ge D(v,w), \] a contradiction. Hence, for every $x$--$y$ path in $T$ and every pair of vertices $v,w$ on that path, we have \( D(v,w)\le D(x,y) \). 

Conversely, suppose that $T$ is a tree on vertex set $X$ such that, for every $x$--$y$ path in $T$ and every pair of vertices $v,w$ on that path, \( D(v,w)\le D(x,y) \). Let $x,y\in X$ and let $P$ be the unique $x$--$y$ path in $T$. For any vertex $z$ of $P$, the hypothesis implies that \[ \max\{D(x,z),D(y,z)\}\le D(x,y). \] Therefore, the subspace induced by $V(P)$ is Robinson with respect to the order induced by the path. Since $x$ and $y$ were arbitrary, every path of $T$ induces a Robinson subspace. Consequently, $(X,D)$ is \Trob with compatible tree $T$. 
\end{proof}

We now present our first characterization of \Trob spaces using dually chordal graphs.

\begin{lemma}\label{lem:level_graphs_dually_chordal}
Let $(X,D)$ be a dissimilarity space and let $T$ be a spanning tree of $(X,D)$.  Then $(X,D)$ is a \Trob space with compatible tree $T$ if and only if for every $\alpha>0$, the $\alpha$-level graph $G_\alpha$ is dually chordal with compatible tree $T$.
\end{lemma}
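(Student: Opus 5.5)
The plan is to route both directions through Theorem~\ref{lem:dually-chordal-t-clique} and Lemma~\ref{lem:path_characterization}. By Theorem~\ref{lem:dually-chordal-t-clique}, $G_\alpha$ is dually chordal with compatible tree $T$ exactly when $T$ satisfies the T-clique property for $G_\alpha$. So it suffices to show the following equivalence. The space $(X,D)$ is \Trob with compatible tree $T$ if and only if, for every $\alpha>0$ and every edge $xy$ of $G_\alpha$, the vertices of the $x$--$y$ path of $T$ form a clique in $G_\alpha$. One technical point must be handled first. The statement requires $T$ to be a spanning tree of each $G_\alpha$, and for small $\alpha$ the graph $G_\alpha$ may be disconnected. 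I will read ``dually chordal with compatible tree $T$'' as the T-clique property of Theorem~\ref{lem:dually-chordal-t-clique}, taking the trees on $X$ used there as trees on the common vertex set. This is consistent with the way the paper's statement is phrased.

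For the forward direction, assume $(X,D)$ is \Trob with compatible tree $T$. Fix $\alpha>0$ and an edge $xy\in E_\alpha$, so $D(x,y)\le\alpha$. Let $v,w$ be any two vertices on the $x$--$y$ path of $T$. Lemma~\ref{lem:path_characterization} gives $D(v,w)\le D(x,y)\le\alpha$, so $vw\in E_\alpha$. Hence the path vertices form a clique in $G_\alpha$, which is the T-clique property, and Theorem~\ref{lem:dually-chordal-t-clique} then shows that $G_\alpha$ is dually chordal with compatible tree $T$.

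For the converse, assume that every $G_\alpha$ is dually chordal with compatible tree $T$, so $T$ has the T-clique property for every $G_\alpha$. Take $x,y\in X$ and $v,w$ on the $x$--$y$ path of $T$. If $x=y$ the claim is trivial. Otherwise set $\alpha=D(x,y)$. If $\alpha>0$, then $xy\in E_\alpha$, and the T-clique property says the path vertices form a clique in $G_\alpha$, so $D(v,w)\le\alpha=D(x,y)$. The case $D(x,y)=0$ with $x\ne y$ cannot use $\alpha=0$, since level graphs are only defined for $\alpha>0$. Here I apply the argument at every $\alpha>0$, each of which contains the edge $xy$, and obtain $D(v,w)\le\alpha$ for all $\alpha>0$, hence $D(v,w)=0$. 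Lemma~\ref{lem:path_characterization} then gives that $(X,D)$ is \Trob with compatible tree $T$.

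I expect the main obstacle to be this bookkeeping rather than any real mathematical difficulty: the zero-dissimilarity case, and making sure $T$ fits the hypotheses of Theorem~\ref{lem:dually-chordal-t-clique} when $G_\alpha$ is disconnected or $T$ is not a subgraph of $G_\alpha$. The core argument is simply the translation between ``$D(v,w)\le D(x,y)$ for all $v,w$ on the path'' and ``the path is a clique at level $D(x,y)$''.
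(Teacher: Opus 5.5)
Your proof is correct. The forward direction is the paper's: Lemma~\ref{lem:path_characterization} turns every $x$--$y$ tree path with $D(x,y)\le\alpha$ into a clique of $G_\alpha$, which is the T-clique property.

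The converse takes a slightly different route. The paper uses the closed-neighborhood form of dual chordality from Brandst\"adt et al.\ and argues by contradiction. It takes a violating triple with $y$ on the $x$--$z$ path and $D(x,y)>D(x,z)$. Then, at level $\alpha^*=D(x,z)$, the endpoints $x,z$ lie in $N_{G_{\alpha^*}}[x]$ but $y$ does not. You instead apply the other direction of Theorem~\ref{lem:dually-chordal-t-clique} at level $\alpha=D(x,y)$. This yields directly the all-pairs inequality $D(v,w)\le D(x,y)$ that Lemma~\ref{lem:path_characterization} requires. So you never need to extract an endpoint-anchored violating triple, and both directions go through the same T-clique bridge.

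Your version also closes two small gaps the paper leaves open. First, if $D(x,z)=0$, the paper's level $\alpha^*=0$ is not admissible; one must choose some $\alpha\in(0,D(x,y))$ instead. Your argument over all $\alpha>0$ handles this case. Second, for small $\alpha$ the graph $G_\alpha$ is disconnected, so ``compatible tree $T$'' cannot literally mean a spanning tree of $G_\alpha$. Your reading via the T-clique property is legitimate, because the proof of Theorem~\ref{lem:dually-chordal-t-clique} never uses $T\subseteq G$. The paper's appeal to the closed-neighborhood characterization tacitly assumes it carries over to that setting; it does, but this is not checked.

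In exchange, the paper's route works with its primary (closed-neighborhood) definition. It also produces the Robinson condition in its original triple form.
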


\begin{proof}
Suppose that $(X,D)$ is \Trob with compatible tree $T$.
Let $\alpha>0$ and let $G_\alpha$ be the corresponding $\alpha$-level graph. Consider an edge $xy\in E(G_\alpha)$. Then $D(x,y)\le \alpha$.

Let $u$ and $v$ be any two vertices on the $x$--$y$ path of $T$.
By Lemma~\ref{lem:path_characterization}, \(D(u,v)\le D(x,y)\le \alpha\).
Hence $uv\in E(G_\alpha)$. Since this holds for every pair of vertices on the $x$--$y$ path, the vertices of this path induce a clique in $G_\alpha$. Therefore, $T$ satisfies the T-clique property for $G_\alpha$, and hence $G_\alpha$ is dually chordal.

Suppose that every level graph $G_\alpha$ is dually chordal with a compatible tree $T$. By the characterization of dually chordal graphs due to Brandstädt et al. \cite{brandstadt1998dually}, every closed neighborhood in $G_\alpha$ induces a subtree of $T$.

Assume, for a contradiction, that $(X,D)$ is not \Trob with compatible tree $T$. By Lemma~\ref{lem:path_characterization}, there exist vertices $x,y,z\in X$ such that $y$ lies on the $x$--$z$ path of $T$ and \(D(x,y)>D(x,z)\).

Set
\(\alpha^{*}=D(x,z)\). Then $xz\in E(G_{\alpha^{*}})$, and therefore
\(
z\in N_{G_{\alpha^{*}}}[x]
\).
On the other hand,
\(
D(x,y)>\alpha^{*}
\),
so $y\notin N_{G_{\alpha^{*}}}[x]$.

Since $y$ lies on the $x$--$z$ path of $T$, the unique path in $T$ joining two vertices of $N_{G_{\alpha^{*}}}[x]$ contains a vertex outside this neighborhood. Consequently, $N_{G_{\alpha^{*}}}[x]$ does not induce a subtree of $T$, contradicting the dually chordal property.

Therefore, $(X,D)$ is \Trob with compatible tree $T$. 
\end{proof}

We now turn to the hypergraph perspective and present a characterization of \Trob spaces in terms of the hypergraphs naturally associated with a dissimilarity space. In particular, we show that the \Trob property can be completely captured by the structure of the cluster, ball, and 2-ball hypergraphs. We begin with an instrumental lemma describing the structure of 2-balls in a \Trob space and their relationship with compatible trees. This result will serve as the key ingredient in establishing the hypergraph characterization.

\begin{lemma}\label{lem:two_balls_paths} 
Let $(X,D)$ be a \Trob space with compatible tree $T$. For every pair of vertices $x,y\in X$, every vertex of the $x$--$y$ path in $T$ belongs to the 2-ball $\Bxy$. Moreover, for every $w\in \Bxy$, both the $w$--$x$ path and the $w$--$y$ path in $T$ are contained in $\Bxy$. \end{lemma}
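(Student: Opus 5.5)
The plan is to derive both assertions directly from Lemma~\ref{lem:path_characterization}, which says that for every $x$--$y$ path of $T$ and every pair $v,w$ on it, $D(v,w)\le D(x,y)$.

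\textbf{First assertion.} Let $z$ be a vertex of the $x$--$y$ path in $T$. Apply Lemma~\ref{lem:path_characterization} to the pairs $(x,z)$ and $(y,z)$, both of which lie on this path. This gives $D(x,z)\le D(x,y)$ and $D(y,z)\le D(x,y)$, so $\max\{D(x,z),D(y,z)\}\le D(x,y)$ and hence $z\in\Bxy$.

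\textbf{Second assertion.} Fix $w\in\Bxy$, so $D(w,x)\le D(x,y)$ and $D(w,y)\le D(x,y)$. By symmetry it suffices to treat the $w$--$x$ path. Let $u$ be a vertex of that path. We must show $D(u,x)\le D(x,y)$ and $D(u,y)\le D(x,y)$.
\begin{itemize}
\item The bound on $D(u,x)$ is immediate: Lemma~\ref{lem:path_characterization} applied to the $w$--$x$ path gives $D(u,x)\le D(w,x)\le D(x,y)$.
\item The bound on $D(u,y)$ needs tree geometry, since $u$ and $y$ need not lie on a common path with controlled endpoints.
\end{itemize}

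\textbf{Tree geometry for $D(u,y)$.} Let $m$ be the median of $w,x,y$ in $T$, i.e.\ the unique vertex common to the three paths joining them. The $w$--$x$ path is the concatenation of the $w$--$m$ and $m$--$x$ paths, so $u$ lies on one of these two segments.
\begin{itemize}
\item If $u$ lies on the $m$--$x$ segment, then $u$ lies on the $x$--$y$ path. Lemma~\ref{lem:path_characterization} gives $D(u,y)\le D(x,y)$.
\item If $u$ lies on the $w$--$m$ segment, then $u$ lies on the $w$--$y$ path, because that path is the $w$--$m$ segment followed by the $m$--$y$ segment. Lemma~\ref{lem:path_characterization} then gives $D(u,y)\le D(w,y)\le D(x,y)$.
\end{itemize}
In both cases $u\in\Bxy$. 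The argument for the $w$--$y$ path is the same with $x$ and $y$ swapped.

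\textbf{Main obstacle.} The only delicate step is this case split, namely justifying that every vertex of the $w$--$x$ path lies on the $x$--$y$ path or on the $w$--$y$ path. It follows from the standard fact that in a tree the three pairwise paths among $w,x,y$ meet in a single median vertex. The same fact was used implicitly in the proof of Theorem~\ref{lem:dually-chordal-t-clique}.
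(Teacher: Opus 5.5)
Your proof is correct and takes essentially the same route as the paper: both use the median vertex of $w,x,y$ in $T$ and apply the Robinson property along the $w$--$x$ and $w$--$y$ paths to show that every vertex of the $w$--$x$ path lies in $B_{xy}$. The differences are cosmetic: you bound $D(u,x)$ uniformly before splitting into cases, and your median argument also covers the paper's separate case where $w$ lies on the $x$--$y$ path, since there $m=w$.
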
 
\begin{proof} 
Let $x,y\in X$, and let $z$ be a vertex on the $x$--$y$ path in $T$. Since $T$ is compatible with the \Trob property, the path between $x$ and $y$ induces a Robinson subspace. Hence 
\[ 
\max\{D(x,z),D(y,z)\}\le D(x,y), 
\] 
and therefore $z\in \Bxy$. 

Now let $w\in \Bxy$. If $w$ lies on the $x$--$y$ path in $T$, then the previous argument implies that both the $w$--$x$ path and the $w$--$y$ path are contained in $\Bxy$. Suppose then that $w$ does not lie on the $x$--$y$ path. Let $c$ be the unique vertex at which the three paths from $w$ to $x$, from $w$ to $y$, and from $x$ to $y$ meet. We prove that every vertex of the $c$--$w$ path belongs to $\Bxy$. Let $h$ be a vertex on the $c$--$w$ path. Since $T$ is compatible with the \Trob property and $h$ lies on the $x$--$w$ path and on the $y$--$w$ path, we have 
\[ 
D(h,x)\le D(x,w) \quad\text{and}\quad D(h,y)\le D(y,w). 
\] 
As $w\in \Bxy$, we also have 
\[ 
D(x,w)\le D(x,y) \quad\text{and}\quad D(y,w)\le D(x,y). 
\] 
Thus \( \max\{D(h,x),D(h,y)\}\le D(x,y)\), and so $h\in \Bxy$. 

Consequently, the whole $c$--$w$ path is contained in $\Bxy$. Since the $w$--$x$ path is the union of the $w$--$c$ path and the $c$--$x$ path, and the $c$--$x$ path is contained in the $x$--$y$ path, it follows that the $w$--$x$ path is contained in $\Bxy$. The same argument applies to the $w$--$y$ path. This proves the result. 
\end{proof}

We now present our second characterization of \Trob spaces, this time from a hypergraph-theoretic perspective.

\begin{lemma}\label{lem:two_ball_hypertree}
Let $(X,D)$ be a dissimilarity space and let $T$ be a spanning tree of $(X,D)$. Then $(X,D)$ is \Trob with compatible tree $T$ if and only if the
2-ball hypergraph of $(X,D)$ is a hypertree with underlying tree $T$.
\end{lemma}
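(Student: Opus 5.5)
The plan is to prove the two implications separately, leaning on Lemma~\ref{lem:two_balls_paths} for the forward direction and on Lemma~\ref{lem:path_characterization} for the converse.

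\emph{Forward direction.} Assume $(X,D)$ is \Trob with compatible tree $T$, and fix $x,y\in X$. We must show that $T[\Bxy]$ is connected. By Lemma~\ref{lem:two_balls_paths}, every $w\in\Bxy$ is joined to $x$ by a $T$-path lying entirely inside $\Bxy$. Hence any two vertices $w,w'\in\Bxy$ are connected inside $T[\Bxy]$ through $x$, and so $\Bxy$ induces a subtree of $T$. Since $x,y$ were arbitrary, the 2-ball hypergraph is a hypertree with underlying tree $T$. The degenerate case $x=y$ gives $\Bxy=\{x\}$, which is trivially a subtree.

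\emph{Converse.} Assume every 2-ball induces a subtree of $T$. By Lemma~\ref{lem:path_characterization}, it suffices to show that for every $x$--$y$ path $P$ of $T$ and all $v,w\in V(P)$ we have $D(v,w)\le D(x,y)$. The first step is to observe that $x,y\in\Bxy$ and $T[\Bxy]$ is connected. Because paths in a tree are unique, this gives $V(P)\subseteq\Bxy$, i.e.\ $\max\{D(x,z),D(y,z)\}\le D(x,y)$ for every $z\in V(P)$. This already handles pairs containing an endpoint. For an interior pair $v,w$, I would argue by induction on the length of $P$, or equivalently by contradiction using a pair maximizing $D$ on $P$. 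Suppose $D(v,w)>D(x,y)$, and order the path as $x,\dots,v,\dots,w,\dots,y$. Apply the hypothesis to the 2-ball $B_{vw}$. The points $x$ and $y$ need not lie in it, so the cleaner route is to use 2-balls of the endpoints directly. Consider the subpath $x\ldots w$: by the above argument applied to it, $v\in B_{xw}$, so $D(v,w)\le D(x,w)$. Likewise $w\in B_{xy}$ gives $D(x,w)\le D(x,y)$. Chaining these yields $D(v,w)\le D(x,w)\le D(x,y)$, which is the desired contradiction. So the general pair reduces to two applications of the endpoint statement, exactly mirroring the chain used in the proof of Lemma~\ref{lem:path_characterization}, and no induction is actually needed.

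The main point to get right is the converse. One must recognize that the subtree condition on $\Bxy$ alone controls only distances to the endpoints $x$ and $y$, and that interior pairs must be handled by applying the condition to the sub-path 2-ball $B_{xw}$. The remaining bookkeeping, namely the choice of orientation along the path and the cases $v$ or $w$ equal to an endpoint, is routine.
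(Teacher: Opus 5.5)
Your proof is correct and follows essentially the same route as the paper: the forward direction uses Lemma~\ref{lem:two_balls_paths} to connect every point of $B_{xy}$ to $x$ inside $B_{xy}$. The converse rests on the same observation that a subtree containing $x$ and $y$ must contain the whole $x$--$y$ path of $T$. Your detour through Lemma~\ref{lem:path_characterization} and the chain $D(v,w)\le D(x,w)\le D(x,y)$ is valid but not needed. The Robinson condition for a triple $a<b<c$ on a path is literally $b\in B_{ac}$, which your endpoint statement applied to the $a$--$c$ subpath already gives. One small slip: $B_{xx}=\{z: D(x,z)=0\}$ need not equal $\{x\}$ for a non-definite dissimilarity, but your general argument covers $x=y$ anyway.
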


\begin{proof}
Suppose first that $(X,D)$ is \Trob with compatible tree $T$. We prove
that every 2-ball induces a subtree of $T$. Let $x,y\in X$ and consider the 2-ball \(\Bxy=\{z\in X:\max\{D(x,z),D(y,z)\}\le D(x,y)\}\).
We claim that $T[\Bxy]$ is connected. Let $u,v\in \Bxy$, and let $w$ be a vertex on the unique $u$--$v$ path in $T$. Since $T$ is a tree, $w$ lies either on the $u$--$x$ path or on the $v$--$x$ path. By \cref{lem:two_balls_paths}, both the $u$--$x$ path and the $v$--$x$ path are contained in $\Bxy$. Hence $w\in \Bxy$. Therefore, every path in $T$ joining two vertices of $\Bxy$ is contained in $\Bxy$, and so $T[\Bxy]$ is a subtree of $T$. Thus, the 2-ball hypergraph is a hypertree with underlying tree $T$.

Conversely, suppose that the 2-ball hypergraph of $(X,D)$ is a hypertree with underlying tree $T$. Let $x,y\in X$, and let $P$ be the unique $x$--$y$ path in $T$. Since every 2-ball induces a subtree of $T$, the subgraph $T[\Bxy]$ is a subtree. As $x,y\in \Bxy$, the unique $x$--$y$ path in $T$ is contained in $T[\Bxy]$. Hence every vertex $z\in V(P)$ belongs to $\Bxy$, and therefore
\[
\max\{D(x,z),D(y,z)\}\le D(x,y).
\]
Thus, the subspace induced by $P$ is Robinson with respect to the order induced by the path. Since $x$ and $y$ were arbitrary, every path of $T$ induces a Robinson subspace. Therefore, $(X,D)$ is \Trob with compatible tree $T$.
\end{proof}

We are now in a position to prove \Cref{thm:characterization}, which brings together the graph-theoretic and hypergraph-theoretic characterizations developed throughout this section.

\begin{proof}[Proof of Theorem \ref{thm:characterization}]
\Cref{lem:level_graphs_dually_chordal} establishes 
the equivalence between $1$ and $2$. While the equivalence between $1$
and $3$ follows from \Cref{lem:two_ball_hypertree}. Finally, by
Proposition~3 of \cite{brucker2005hypertrees}, the cluster hypergraph, ball
hypergraph, and $2$-ball hypergraph of a dissimilarity space are either all
hypertrees or none of them are, and whenever they are hypertrees they share
the same underlying trees. Hence, statements $3$, $4$, and $5$ are
equivalent. The result follows.
\end{proof}


\section{Recognition}\label{sec:recognition}

This section presents an algorithm to decide whenever a dissimilarity space is \Trob.
Our approach leverages a key property established in Lemma~\ref{thm:mst} stating that any compatible tree must be a minimum spanning tree of $(X,D)$.
This observation narrows the set of compatible tree candidates and allows us to evaluate each MST individually (Algorithm~\ref{alg:algtrob}).
Building on this property, in Theorem~\ref{thm:rob_recognition} we derive a recognition algorithm for \Trob spaces with time complexity $O(K n^2)$ and space complexity $O(n^2)$, where $n$ denotes the number of elements in the space and $K$ denotes the number of distinct minimum spanning trees of the dissimilarity space.

We begin with the following result, which provides a key structural property of compatible trees. 

\begin{lemma}
\label{thm:mst}
    Let $(X,D)$ be a \emph{\Trob} dissimilarity space. 
    If $T$ is a compatible tree for $(X,D)$, then $T$ is a minimum spanning tree of $(X,D)$.
\end{lemma}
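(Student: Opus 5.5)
We will use the cycle characterization of minimum spanning trees. A spanning tree $T$ of the weighted complete graph $(X,D)$ is an MST if and only if, for every non-tree edge $xy$, every edge on the unique $x$--$y$ path of $T$ has weight at most $D(x,y)$. This is the standard cycle optimality condition. Its sufficiency can also be checked directly: any spanning tree $T'$ can be reached from $T$ by exchanges in which a tree edge on the $T$-path of a non-tree edge $xy$ is replaced by $xy$. Each such exchange never decreases the weight, so $w(T)\le w(T')$.

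The argument then takes two steps. First, fix any pair $x,y\in X$ with $xy\notin E(T)$, and let $P$ be the $x$--$y$ path in $T$. Let $uv$ be any edge of $P$; both $u$ and $v$ are vertices of $P$. By Lemma~\ref{lem:path_characterization}, applied with the pair $v,w$ taken to be $u,v$, we get $D(u,v)\le D(x,y)$. So every tree edge on the fundamental cycle of $xy$ is no heavier than $xy$, which is exactly the cycle optimality condition.

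Second, I will either invoke the cycle condition as a known fact or include a short exchange argument to keep the proof self-contained. The exchange argument runs as follows. Let $T^*$ be an MST that shares the maximum number of edges with $T$, and suppose $T^*\neq T$. Choose an edge $e\in T^*\setminus T$. Removing $e$ from $T^*$ splits $X$ into two sides, and the $T$-path between the endpoints of $e$ must contain an edge $f\in T\setminus T^*$ crossing between them. By the first step, $D(f)\le D(e)$. Then $T^*-e+f$ is a spanning tree of weight at most $w(T^*)$, hence also an MST, and it shares more edges with $T$, a contradiction. Therefore $T=T^*$, so $T$ is a minimum spanning tree.

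I expect no serious obstacle. The only care needed is in the exchange step, where $f$ must be chosen on the $T$-path of $e$ and crossing the cut of $T^*-e$, so that $T^*-e+f$ is indeed a tree and the inequality from Lemma~\ref{lem:path_characterization} applies to $f$.
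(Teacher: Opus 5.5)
Your proof is correct and follows the paper's route: you use Lemma~\ref{lem:path_characterization} to show that every tree edge on the fundamental cycle of a non-tree edge $xy$ has weight at most $D(x,y)$, and then apply the cycle optimality criterion for MSTs. The paper does the same, stated contrapositively and citing the criterion rather than proving it. Your max-overlap exchange argument correctly makes the criterion self-contained, and it should replace the looser sketch in your first paragraph: there, after the first exchange the current tree is no longer $T$, so the inequality from your first step does not directly justify the later exchanges.
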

\begin{proof}
    Let $T$ be a compatible tree for a \emph{\Trob} dissimilarity space $(X,D)$.
    Suppose, for the sake of contradiction, that $T$ is not a minimum spanning tree.
    By the cycle property of MSTs, there exists a pair of elements $x,y\in X$ such that $xy$ is not an edge of $T$ and the unique $x-y$-path joining $x$ and $y$ in $T$ contains an edge $vw$ with $D(v,w) > D(x,y)$.
    This contradicts Lemma~\ref{lem:path_characterization}, which requires every edge along a tree path between $x$ and $y$ to have dissimilarity at most $D(x,y)$ whenever $T$ is compatible. 
\end{proof}

Lemma~\ref{thm:mst} provides a necessary condition for compatible trees and restricts the search space to minimum spanning trees. However, the condition is not sufficient: not every MST of a dissimilarity space is compatible, as illustrated in Figure~\ref{fig:tree_evaluation}.

Algorithm~\ref{alg:algtrob} describes a validation procedure that determines whether a given spanning tree $T$ is compatible with a dissimilarity space $(X,D)$. 
The algorithm adopts a dynamic programming strategy in which the Robinson condition is checked on all paths of $T$ in non-decreasing order according to their lengths, where trivial and elementary paths are considered Robinson by definition (Steps 1--7). 
To this end, all non-elementary paths of $T$ are enumerated and grouped by length, retaining only the first two and last two nodes of each path (Step 8); paths are then processed from shortest to longest (Steps 9--15).

The order in which paths are processed is critical for both the correctness and the time complexity of the algorithm: when a path of length $k$ is examined, all its proper sub-paths have already been validated. 
This invariant is exploited by the subroutine PathValidation, described in Algorithm~\ref{alg:subroutine}, which determines whether a path $P = [x_1, \ldots,  x_k]$ satisfies the Robinson condition by inspecting only its endpoints and their immediate neighbors (Steps 4--6). 
Specifically, it tests if all proper sub-paths of $P$ are Robinson together with the inequality:
\[
    D(x_1, x_k) \geq \max\{D(x_1, x_{k-1}),\, D(x_2, x_k)\}.
\]
This local check, which runs in $O(1)$ time, replaces the exhaustive verification over all intermediate triplets in $P$, whose correctness is guaranteed by the inductive hypothesis that all its proper sub-paths have already been tested. 
Furthermore, the algorithm simultaneously counts the number of Robinson paths in $T$, providing a quantitative measure of compatibility that will be used extensively in the subsequent sections.

\begin{lemma}\label{lem:algorithm-validation}
Let $(X,D)$ be a dissimilarity space with $|X|=n$, and let $T$ be a tree on vertex set $X$.
Algorithm~\ref{alg:algtrob} returns $n(n-1)/2$ if and only if $T$ is a compatible tree for $(X,D)$.
\end{lemma}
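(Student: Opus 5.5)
The plan is to prove an invariant by induction on path length, and then read off the equivalence from the final count. Following the description of Algorithm~\ref{alg:algtrob}, write $R(P)\in\{0,1\}$ for the flag the algorithm assigns to a path $P$ of $T$. Trivial and elementary paths (one or two vertices) get $R(P)=1$. A path $P=[x_1,\dots,x_k]$ with $k\ge 3$ gets $R(P)=1$ exactly when both maximal proper sub-paths $[x_1,\dots,x_{k-1}]$ and $[x_2,\dots,x_k]$ were flagged Robinson and
\[
D(x_1,x_k)\ge \max\{D(x_1,x_{k-1}),D(x_2,x_k)\}.
\]
The returned value counts the paths with $R(P)=1$ among the $n(n-1)/2$ paths of $T$ between distinct vertices, one per unordered pair. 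The counting convention must be fixed so that this total is exactly $n(n-1)/2$. The claim then reduces to the following statement:
\[
R(P)=1 \iff (V(P),D) \text{ is Robinson with respect to the order of } P.
\]
Granting this, the output equals $n(n-1)/2$ iff every path of $T$ is Robinson, which is Definition~\ref{def:trob} of $T$ being compatible.

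I prove the invariant by strong induction on $k$, which is legitimate because paths are processed in non-decreasing length. The cases $k\le 2$ are immediate. For $k\ge3$, the forward direction is easy: if $P$ is Robinson, then every sub-path with the induced order is Robinson, so the two maximal sub-paths get flag $1$ by induction, and the displayed inequality is the Robinson condition for the triples $(x_1,x_{k-1},x_k)$ and $(x_1,x_2,x_k)$.

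The backward direction is the main step. Assume both maximal sub-paths are Robinson and the endpoint inequality holds, and take a triple $x_i,x_j,x_l$ with $i<j<l$. If $i\ge2$ or $l\le k-1$, the triple lies in one of the two sub-paths and we are done. So the only remaining case is $i=1$, $l=k$, and we must show $D(x_1,x_k)\ge\max\{D(x_1,x_j),D(x_j,x_k)\}$ for $1<j<k$.

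Here I would chain inequalities inside the sub-paths. Since $[x_1,\dots,x_{k-1}]$ is Robinson and $j\le k-1$, we get $D(x_1,x_j)\le D(x_1,x_{k-1})$; this is trivial when $j=k-1$ and otherwise uses the triple $(x_1,x_j,x_{k-1})$. Combining with the endpoint inequality gives $D(x_1,x_j)\le D(x_1,x_k)$. Symmetrically, $D(x_j,x_k)\le D(x_2,x_k)\le D(x_1,x_k)$, using the other sub-path. Hence $P$ is Robinson. This is essentially the argument of Lemma~\ref{lem:path_characterization} localized to a single path.

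The remaining care point is bookkeeping. I must confirm that the algorithm retains enough data (the first two and last two vertices) to identify the two sub-paths and look up their flags. I must also confirm that every path of $T$ is enumerated exactly once, so that the count ranges precisely over the $n(n-1)/2$ unordered pairs, with trivial single-vertex paths excluded or else accounted for consistently. I expect this consistency check of the count against the pseudocode to be the only delicate part; the mathematical content is the induction above.
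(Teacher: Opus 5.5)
Your proof is correct and is essentially the paper's argument. Your backward step (both maximal sub-paths Robinson plus the endpoint inequality force $P$ to be Robinson, via $D(x_1,x_j)\le D(x_1,x_{k-1})\le D(x_1,x_k)$ and its mirror) is exactly the contrapositive of the paper's minimal-counterexample step, which shows the check in PathValidation fails on a shortest non-Robinson path. Casting it as a two-sided invariant proved by induction on length also justifies the direction the paper only asserts: on a compatible tree every path really gets flagged, because its sub-path flags must already have been set.
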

\begin{proof}
If $T$ is compatible with $(X, D)$, every path satisfies the Robinson condition, and Algorithm~\ref{alg:algtrob}  returns $n(n-1)/2$.
Conversely, if $T$ is not compatible, there exists a path that does not satisfy the Robinson condition.
Let $P = [x_1, \ldots, x_k]$ be such a path of minimum length.
By minimality, every proper sub-path of $P$ is Robinson, so  Lemma~\ref{lem:path_characterization} gives
\[
    D(x_1, x_{k-1}) \geq \max_{1\le i \le k-1} D(x_1, x_i), \qquad 
    D(x_2, x_k) \geq \max_{2 \le j \le k} D(x_j, x_k).
\]
Since $P$ is not Robinson, there exists an internal vertex $y \in P$, with $y\notin\{x_1,x_k\}$, and such that $D(x_1, x_k) < \max\{D(x_1, y), D(y, x_k)\}$; therefore, 
\[
    D(x_1, x_k) < \max\{D(x_1, y),\, D(y, x_k)\} \leq \max\{D(x_1, x_{k-1}),\, D(x_2, x_k)\}.
\]
Hence, the condition tested in PathValidation fails for $P$, and the subroutine correctly returns \texttt{False}, identifying the path as a violation of the Robinson condition. 
Therefore, Algorithm~\ref{alg:algtrob} returns a value strictly smaller than $n(n-1)/2$.
\end{proof}

\begin{algorithm}[tp]
\caption{Subroutine \text{PathValidation}}
\label{alg:subroutine}
\begin{algorithmic}[1]
    \REQUIRE $D$ dissimilarity matrix, 
            $Y$ boolean matrix of valid Robinson paths, 
            and $P$ a path in $T$
    \STATE $output \leftarrow \texttt{False}$
    \STATE $x_1, x_2 \leftarrow$ first and second element in path $P$
    \STATE $x_3, x_4 \leftarrow$ second-to-last and last elements of $P$  
    \IF {$Y[x_1,x_{3}]$ \AND $Y[x_2,x_{4}]$ \AND $D(x_1, x_4) \ge \max\{D(x_1, x_3), D(x_2, x_4)\}$}
        \STATE $output \leftarrow \texttt{True}$ 
    \ENDIF
    \RETURN $output$
    \end{algorithmic}
\end{algorithm}

\begin{algorithm}[tp]
\caption{\Trob Validation}
\label{alg:algtrob}
\begin{algorithmic}[1]
    \REQUIRE $(X,D)$ dissimilarity space on $n$ elements, 
    $T$ spanning tree
    \STATE $paths \leftarrow n-1$ \COMMENT{counter of non-trivial valid Robinson paths}
    \STATE $Y \leftarrow \texttt{False}^{n \times n}$ \COMMENT{boolean matrix of valid Robinson paths}    
    \FOR{$ x \in X$}
        \FOR{$ y \in N[x]$}
            \STATE $Y[x,y]=\texttt{True}$ 
            \COMMENT{trivial and elementary paths}
        \ENDFOR
    \ENDFOR
        \\[3pt]
        \STATE Compute, for each $k=2, \dots, \text{diam}(T)$, the set $\mathcal{P}[k]$ of all paths of length $k$ in $T$, storing for each path the first two and last two vertices
    \\[3pt]
    \FOR{$k = 2$ \TO $\text{diam}(T)$}
        \FOR{$P \in \mathcal{P}[k]$}
            \IF{\text{PathValidation}$(D, Y, P)$}
                \STATE $x,y \leftarrow$ extreme nodes of $P$
                \STATE $Y[x,y] \leftarrow \texttt{True};\, Y[y,x] \leftarrow \texttt{True}$
                \STATE $paths \leftarrow paths+1$ 
            \ENDIF
        \ENDFOR
    \ENDFOR
    \RETURN $paths$
    \end{algorithmic}
\end{algorithm}

\begin{lemma}\label{lem:algorithm2-complexity}
Algorithm~\ref{alg:algtrob} runs in $O(n^2)$ time and requires $O(n^2)$ space, where $n$ denotes the number of elements of the input dissimilarity space.
\end{lemma}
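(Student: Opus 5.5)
The plan is to bound separately the cost of initialization, of Step 8 (enumerating all paths with their first two and last two vertices), and of the main loop. The key structural fact is that a tree on $n$ vertices has exactly one path between each unordered pair of vertices. So there are exactly $\binom{n}{2}$ non-trivial paths, and $\sum_{k\ge 2}|\mathcal P[k]| = \binom{n}{2}-(n-1) = O(n^2)$. Each stored path uses $O(1)$ data: its length and four vertices. Hence all the sets $\mathcal P[k]$ together occupy $O(n^2)$ space, as do the matrices $D$ and $Y$. This settles the space bound.

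Initialization (Steps 1--7) touches the $n$ diagonal entries and the $2(n-1)$ tree-adjacency entries of $Y$. Together with allocating $Y$, this costs $O(n^2)$. For the main loop, each path is processed once. Each call to PathValidation performs a constant number of lookups in $Y$ and $D$ and one comparison, so it costs $O(1)$, and the update steps are also $O(1)$. The loop therefore costs $O(n^2)$ in total, provided the paths are available already bucketed by length. This bucketing is what guarantees the invariant used in Lemma~\ref{lem:algorithm-validation}.

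The main obstacle is showing that Step 8 can be done in $O(n^2)$ time. For each $x$ we produce, for every $y\neq x$, the length $d_T(x,y)$ together with the second vertex and the second-to-last vertex of the $x$--$y$ path. To do this, root $T$ at each $x\in X$ in turn and run a BFS/DFS, which takes $O(n)$ per root. During the traversal we propagate, for each vertex $y$ reached, its depth, its parent $p(y)$ (which is the second-to-last vertex of the $x$--$y$ path), and the child of $x$ on whose subtree $y$ lies (which is the second vertex, inherited from the parent). Summed over all roots this is $O(n^2)$ time. Each unordered pair appears twice, so we keep it only when $x<y$ under a fixed labelling. Inserting each record into the bucket $\mathcal P[d_T(x,y)]$ is a counting-sort step, since lengths lie in $\{2,\dots,\mathrm{diam}(T)\}\subseteq\{2,\dots,n-1\}$. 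The bucketing thus costs $O(n^2+n)$ with no comparison sort. Adding the three contributions gives $O(n^2)$ time and $O(n^2)$ space.
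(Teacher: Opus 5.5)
Your proposal is correct and follows essentially the same route as the paper: a traversal of $T$ rooted at each vertex builds the length-bucketed path records in $O(n^2)$ time, each of the $O(n^2)$ paths is then checked in $O(1)$ time by PathValidation, and the space is dominated by $Y$ (and $D$) together with the stored records. You also fill in details the paper leaves implicit: the second vertex is recovered as the root's child on the way to $y$ and the penultimate vertex as the parent $p(y)$, duplicate orientations are discarded, and the records are bucketed by length without a comparison sort.
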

\begin{proof}
The family $\{\mathcal{P}[k]\}_{k\in [2,\operatorname{diam}(T)]}$, storing the first two and last two elements of every paths of length $k$ in $T$, can be constructed in $O(n^2)$ time by computing pairwise distances during a DFS traversal rooted at each vertex of $T$.
Since each of the $O(n^2)$ paths is processed in $O(1)$ time by Algorithm~\ref{alg:subroutine}, the validation of a single tree takes $O(n^2)$ time. 
The overall space is $O(n^2)$, dominated by the matrix $Y$ and the stored path endpoints.
\end{proof}

By combining the validation Algorithm~\ref{alg:algtrob} with an exhaustive examination of the MST space, we derive \Cref{thm:rob_recognition}, stating that \Trob spaces can be recognized with a time complexity $O(Kn^2)$ and space $O(n^2)$, where \(K\) denotes the number of minimum spanning
trees of the dissimilarity space.

\begin{proof}[Proof of \Cref{thm:rob_recognition}]
By Lemma~\ref{thm:mst}, any tree compatible with $(X,D)$ must be a minimum spanning tree.
The recognition algorithm therefore enumerates all MSTs of $(X,D)$ and validates each one
using Algorithm~\ref{alg:algtrob}, accepting if and only if at least one compatible tree is found.

The enumeration of all MSTs proceeds in two stages.
In the first stage, an auxiliary graph is constructed  whose spanning trees are in bijection
with the MSTs of $(X,D)$~\cite{eppstein1995representing}. 
The auxiliary graph can be built in $O(m+n\log n)$ time and $O(n^2)$ space, where $m$ is the number of edges in the former graph ($m=n(n-1)/2$ in our case). 
In the second stage, the procedure described in~\cite{Kapoor1995} enumerates all $K$ spanning trees of the
auxiliary graph, thereby recovering all $K$ MSTs of $(X,D)$. 
This enumeration can be performed in $O(m + n\log n + K) = O(n^2 + K)$ time to describe all spanning trees implicitly, with an additional $O(Kn)$ time to output each MST explicitly, and uses $O(n^2)$ space throughout.

Finally, each of the $K$ MSTs is validated by Algorithm~\ref{alg:algtrob} in $O(n^2)$ time, yielding an  overall time and space complexity of $O(Kn^2)$ and $O(n^2)$, respectively.
\end{proof}

Since the number of MTSs can be exponential in the number of elements in the worst case, our method does not universally outperform the existing alternative with complexity $O(n^5)$~\cite{brucker2005hypertrees}. 
Nevertheless, our approach represents a significant improvement when the number of MSTs is small ($K = o(n^3)$). 
This condition is satisfied in numerous practically relevant settings and many real-world network datasets where dissimilarity values are predominantly distinct.

\section{Optimization approach}
\label{sec:optimization_approach}

Now, we introduce an optimization framework designed to bridge the gap between arbitrary dissimilarity spaces and exact \Trob spaces. Because empirical data rarely exhibit perfect structural compatibility, not all dissimilarity matrices inherently admit a \Trob spanning tree. Therefore, we aim to find a spanning tree that maximizes structural adherence to the \Trob property. Next, Section \ref{subsec:problem_def} formalizes this objective by defining the combinatorial optimization problem and illustrating it with a minimal working example. Subsequently, Section \ref{subsec:algorithmic_proposal} details our algorithmic proposal to efficiently navigate the search space of spanning trees.

\subsection{Problem formulation}
\label{subsec:problem_def}

As established in Definition \ref{def:trob}, a dissimilarity space $(X,D)$ is \Trob if it admits a compatible tree where every path induces a Robinson subspace. However, empirical data rarely exhibits perfect structural compatibility. To quantify the structural adherence of any candidate spanning tree $T \in \mathcal{T}(X)$ to the \Trob property, we introduce the objective function $PR_D(T)$. 

Motivated by Lemma \ref{lem:path_characterization}, $PR_D(T)$ evaluates the structural affinity by counting the number of valid Robinsonian paths in $T$. The optimization problem consists of finding a spanning tree $T^*$ that maximizes this path count:
\begin{equation}
    T^* = \underset{T \in \mathcal{T}(X)}{\arg\max} \; PR_D(T).
\end{equation}

By Lemma \ref{lem:algorithm-validation}, the theoretical maximum for $PR_D(T)$ is $n(n-1)/2$, which is strictly attained if and only if $(X,D)$ is exactly \Trob and $T$ is its compatible tree.

To illustrate this objective and highlight why finding $T^*$ is non-trivial, even when restricting the search to the space of exact minimum spanning trees (as mandated by \Cref{thm:mst}), consider the following dissimilarity matrix:
\begin{equation}
D = \begin{bmatrix} 
0 & 1 & 1 & 1 \\ 
1 & 0 & 1 & 2 \\ 
1 & 1 & 0 & 1 \\ 
1 & 2 & 1 & 0 
\end{bmatrix}.
\end{equation}

For this specific dissimilarity space $(X,D)$, Figure \ref{fig:tree_evaluation} presents two candidate minimum spanning trees, $T_1$ and $T_2$, both having an identical minimal weight of $w(T)=3$.

\begin{figure}[t]
    \centering
    \begin{tikzpicture}[
        scale=1.1,
        node/.style={circle, draw=black, thick, fill=white, minimum size=6.5mm, inner sep=0pt},
        tree edge/.style={ultra thick, black},
        valid path/.style={thick, dashed, draw=blue!80!black},
        invalid path/.style={thick, dashed, draw=red!80!black},
        lbl/.style={midway, fill=white, inner sep=2pt, font=\scriptsize, text=black}
    ]
        
        \begin{scope}[shift={(0,0)}]
            \node[node] (x1) at (0, 1.5) {$x_1$};
            \node[node] (x2) at (-1.8, 0) {$x_2$};
            \node[node] (x4) at (0, 0) {$x_4$};
            \node[node] (x3) at (1.8, 0) {$x_3$};
            
            \draw[tree edge] (x1) -- (x2);
            \draw[tree edge] (x1) -- (x4);
            \draw[tree edge] (x1) -- (x3);
            
            \draw[valid path, bend left=25] (x2) to node[lbl] {$1$} (x1);
            \draw[valid path, bend right=25] (x4) to node[lbl] {$1$} (x1);
            \draw[valid path, bend right=25] (x3) to node[lbl] {$1$} (x1);
            \draw[valid path, bend right=40] (x2) to node[lbl] {$2$} (x4);
            \draw[valid path, bend right=40] (x4) to node[lbl] {$1$} (x3);
            \draw[valid path, bend right=65] (x2) to node[lbl] {$1$} (x3);
            
            \node[draw=none] at (0, -1.6) {$T_1$};
        \end{scope}

        \begin{scope}[shift={(6.8,0)}]
            \node[node] (y2) at (-2.4, 0) {$x_2$};
            \node[node] (y1) at (-0.8, 0) {$x_1$};
            \node[node] (y4) at (0.8, 0) {$x_4$};
            \node[node] (y3) at (2.4, 0) {$x_3$};
            
            \draw[tree edge] (y2) -- (y1);
            \draw[tree edge] (y1) -- (y4);
            \draw[tree edge] (y4) -- (y3);
            
            \draw[valid path, bend left=40] (y2) to node[lbl] {$1$} (y1);
            \draw[valid path, bend left=40] (y1) to node[lbl] {$1$} (y4);
            \draw[valid path, bend left=40] (y4) to node[lbl] {$1$} (y3);
            \draw[valid path, bend right=45] (y2) to node[lbl] {$2$} (y4);
            \draw[valid path, bend right=45] (y1) to node[lbl] {$1$} (y3);
            \draw[invalid path, bend left=60] (y2) to node[lbl, text=red!80!black, font=\scriptsize] {$1$} node[above=5pt, text=red!80!black, font=\scriptsize] {$D(x_2,x_4) > D(x_2,x_3)$} (y3);
            
            \node[draw=none] at (0, -1.6) {$T_2$};
        \end{scope}

        \begin{scope}[shift={(1.9,-3)}]
            \draw[tree edge] (-3.7, 0.5) --++ (0.8, 0) node[right, font=\small, text=black, inner sep=4pt] {Tree edge};
            \draw[valid path] (-.7, 0.5) --++ (0.8, 0) node[right, font=\small, text=black, inner sep=4pt] {Valid Robinsonian path};
            \draw[invalid path] (4, 0.5) --++ (0.8, 0) node[right, font=\small, text=black, inner sep=4pt] {\Trob violation};
        \end{scope}
    \end{tikzpicture}
\caption{Evaluation of \Trob compatibility for spanning trees $T_1$ and $T_2$, where dashed arcs represent pairwise dissimilarities and highlight the structural violation in $T_2$.}
    \label{fig:tree_evaluation}
\end{figure}

Evaluating the objective function over both candidates highlights their structural differences. The star graph $T_1$ satisfies the Robinson constraint across all six of its unique paths. For instance, the path $P_{x_2 x_4} = [x_2, x_1, x_4]$ connects endpoints with a dissimilarity of $D(x_2, x_4) = 2$; its internal connections yield distances $D(x_1, x_2) = 1$ and $D(x_1, x_4) = 1$, strictly adhering to the condition $D(u,v) \le D(x_2, x_4)$. Thus, $PR_D(T_1) = 6$, achieving the global maximum.

In contrast, the path graph $T_2$ yields $PR_D(T_2) = 5$. While five of its subpaths are valid, the maximal path $P_{x_2 x_3}$ violates the Robinson condition because evaluating the intermediate node $x_4$ reveals a distance of $D(x_2, x_4) = 2$, which strictly exceeds the endpoint dissimilarity $D(x_2, x_3) = 1$. 

This confirms the fact that minimizing edge weight is a necessary but insufficient condition for maximizing the \Trob property. Identifying an optimal tree requires navigating the highly degenerate space of MSTs, which motivates the algorithmic proposal detailed in the following subsection.

\subsection{Algorithmic proposal}
\label{subsec:algorithmic_proposal}

Given the combinatorial nature of the problem defined in Section \ref{subsec:problem_def}, finding the exact \Trob spanning tree requires navigating the complete space of $\mathcal{T}(X)$. Since the number of spanning trees for a complete graph on $n$ vertices is $n^{n-2}$, 
exact algorithms become computationally prohibitive for large-scale instances.

To tackle this problem, we propose a multistart heuristic framework. Multistart methods are highly effective for rugged objective landscapes, as they iteratively balance exploration (diversification) and exploitation (intensification) and have been proposed successfully for similar problems \cite{cavero2025graph}. Our framework operates by first generating a diverse, high-quality pool of initial trees. Then, it sequentially applies a local search phase to refine the best available topologies, maximizing their structural compatibility until a computational budget is exhausted. 

Algorithm \ref{alg:multistart_framework} outlines this proposed approach, taking the dissimilarity matrix $D$ and a global time limit $t_{max}$ as inputs. After initializing the incumbent solution $T^*$ and its objective value $PR_{best}$ (Steps \ref{step:init_tree}--\ref{step:init_pr}), the framework executes the construction phase to build and sort a pool of candidate trees, $\mathcal{T}_{pool}$ (Step \ref{step:constructpool}). The search is then driven by a time-bounded loop (Step \ref{step:while_loop}). In each iteration $i$, the algorithm selects the $i$-th best tree from the pool as the initial solution $T_{init}$ (Step \ref{step:select}) and applies a local search to refine it into a local optimum $T_{local}$ (Step \ref{step:localsearch}). If $T_{local}$ strictly improves upon the incumbent objective, $T^*$ is updated accordingly (Steps \ref{step:check_best}--\ref{step:update_pr}). This sequential improvement process repeats until the $t_{max}$ budget is consumed or the pool is exhausted, returning the highest-quality \Trob approximation found.

\begin{algorithm}[t]
\caption{Multistart Algorithm}
\label{alg:multistart_framework}
\begin{algorithmic}[1]
\REQUIRE Dissimilarity matrix $D$, maximum execution time $t_{max}$
\ENSURE Spanning tree $T^*$ approximating the optimal \Trob compatibility
\STATE $T^* \leftarrow \emptyset$ \label{step:init_tree}
\STATE $PR_{best} \leftarrow -1$ \label{step:init_pr}
\STATE $\mathcal{T}_{pool} \leftarrow \text{ConstructionPhase}(D)$ \label{step:constructpool}
\STATE $i \leftarrow 1$
\WHILE{elapsed time $< t_{max}$ \textbf{and} $i \le |\mathcal{T}_{pool}|$} \label{step:while_loop}
    \STATE $T_{init} \leftarrow \mathcal{T}_{pool}[i]$ \label{step:select}
    \STATE $T_{local} \leftarrow \text{LocalSearch}(T_{init}, D)$ \label{step:localsearch}
    \IF{$PR_D(T_{local}) > PR_{best}$} \label{step:check_best}
        \STATE $T^* \leftarrow T_{local}$ \label{step:update_tree}
        \STATE $PR_{best} \leftarrow PR_D(T_{local})$ \label{step:update_pr}
    \ENDIF
    \STATE $i \leftarrow i + 1$
\ENDWHILE
\RETURN $T^*$ \label{step:return}
\end{algorithmic}
\end{algorithm}

\subsubsection*{Construction phase}
\label{subsubsec:construction}

The construction phase (Algorithm \ref{alg:multistart_framework}, Step \ref{step:constructpool}) is responsible for generating the diverse pool of candidate spanning trees, $\mathcal{T}_{pool}$, which provides the starting points for the subsequent local search. This procedure exploits the structural correlation between MSTs and \Trob spaces through a combination of exact evaluations and stochastic perturbations.

First, relying on the necessary condition established in \Cref{thm:mst}, which proves that any compatible tree must be a minimum spanning tree of $(X,D)$, we explore the set of exact MSTs for the dissimilarity space. Because empirical data often contains tied dissimilarity values, the optimization landscape frequently exhibits a high multiplicity of optimal solutions. We generate a subset of exact minimum weight trees within a predefined computational time limit $t_{mst}$, appending all unique topologies to our candidate pool. 

To escape potentially suboptimal topologies inherent to exact MSTs, the second stage employs a noise-induced exploration strategy. We apply random perturbations to the original matrix, thereby generating novel MST topologies. Specifically, we compute a perturbed matrix $D_{noise} = D + \varepsilon R$, where $R \in \mathbb{R}^{n \times n}$ is a symmetric random matrix with a zero diagonal whose strictly upper-triangular entries are sampled from a standard normal distribution $\mathcal{N}(0,1)$, and $\varepsilon > 0$ dictates the noise intensity. This perturbation is applied independently $n_{pert}$ times, and one MST of each modified space $(X, D_{noise})$ is computed and added to the candidate pool. 

Crucially, while the perturbed matrices dictate the generated tree topologies, their structural compatibility is exclusively evaluated using the original dissimilarity matrix $D$. Once both exact and stochastically perturbed trees are gathered, we evaluate the objective function $PR_D(T)$ for every tree in the pool. Finally, $\mathcal{T}_{pool}$ is sorted in descending order according to these objective values. Consequently, during the multistart execution, the algorithm systematically prioritizes the refinement of the most promising topological structures.

\subsubsection*{Local search phase}
\label{subsubsec:local_search}

The initial trees generated during the construction phase act as starting points for a systematic neighborhood exploration. The local search phase (Algorithm \ref{alg:multistart_framework}, Step \ref{step:localsearch}) is designed to iteratively refine a candidate tree $T$ by applying structural perturbations that increase its objective value $PR_D(T)$ until a local maximum is reached.

To formally define the neighborhood of a solution, let $T = (X, E)$ be a candidate spanning tree and let $L(T) \subset X$ denote the set of its leaf vertices (i.e., vertices with a degree of exactly $1$). We define a fundamental structural operation denoted \textit{leaf relocation}. For any given leaf $v \in L(T)$ adjacent to a unique vertex $u$, the operation consists of deleting the edge $vu$ and inserting a new edge $vw$, where $w \in X \setminus \{v, u\}$. Because this operation merely reattaches a terminal node to a different part of the existing structure, it strictly preserves the connectivity and acyclic properties required for a spanning tree.

Based on this operation, $T$'s neighborhood $\mathcal{N}(T)$ is defined as the set of all valid spanning trees that can be reached from $T$ by applying exactly one leaf relocation. Mathematically, this is expressed as:
\begin{equation*}
    \mathcal{N}(T) = \Big\{ \big(X, (E \setminus \{ uv\} )\cup \{vw\}\big) \;\Big|\; v \in L(T), \; uv \in E, \; w \in X \setminus \{u,v\} \Big\}.
\end{equation*}

To illustrate this neighborhood generation, consider the candidate tree $T_2$ introduced in Section \ref{subsec:problem_def}, which induces the path graph $x_2 - x_1 - x_4 - x_3$. The set of leaves is $L(T_2) = \{x_2, x_3\}$. Initially, $x_2$ is adjacent to $x_1$. By removing the edge $x_2 x_1$, we can reattach $x_2$ to any other vertex in the set $\{x_4, x_3\}$. As shown in Figure \ref{fig:neighborhood_example}, relocating $x_2$ to $x_4$ generates the neighbor $T_{2a}$, while relocating it to $x_3$ generates the neighbor $T_{2b}$. Applying this same logic to the other leaf $x_3$ generates two additional neighbors, resulting in a total neighborhood size of $|\mathcal{N}(T_2)| = 4$. 

\begin{figure}[t]
    \centering
    \begin{tikzpicture}[
        scale=1.1,
        node/.style={circle, draw=black, thick, fill=white, minimum size=6.5mm, inner sep=0pt},
        tree edge/.style={ultra thick, black},
        new edge/.style={ultra thick, solid, draw=green!60!black},
        deleted edge/.style={ultra thick, dotted, draw=red!40!white}
    ]
        
        \begin{scope}[shift={(3.5, 3)}]
            \node[node] (y2) at (-2.4, -0.2) {$x_2$};
            \node[node] (y1) at (-0.8, -0.2) {$x_1$};
            \node[node] (y4) at (0.8, -0.2) {$x_4$};
            \node[node] (y3) at (2.4, -0.2) {$x_3$};
            
            \draw[tree edge] (y2) -- (y1);
            \draw[tree edge] (y1) -- (y4);
            \draw[tree edge] (y4) -- (y3);
            
            \node[draw=none] at (0, 0.8) {Current tree $T_2$};
            \node[draw=none, font=\small, text=gray] at (0, 0.4) {Leaf to relocate: $x_2$};
            
            \draw[->, thick, shorten >= 5pt, shorten <= 5pt] (-0.8, -0.75) -- 
                node[midway, above left, font=\small, inner sep=2pt] {Relocate $x_2$ to $x_4$} (-2, -1.5);
            \draw[->, thick, shorten >= 5pt, shorten <= 5pt] (0.8, -0.75) -- 
                node[midway, above right, font=\small, inner sep=2pt] {Relocate $x_2$ to $x_3$} (2, -1.5);
        \end{scope}

        \begin{scope}[shift={(0,0)}]
            \node[node] (n1_1) at (-1.5, 0) {$x_1$};
            \node[node] (n1_4) at (0, 0) {$x_4$};
            \node[node] (n1_3) at (1.5, 0) {$x_3$};
            \node[node] (n1_2) at (0, 1.3) {$x_2$};
            
            \draw[deleted edge] (n1_2) -- (n1_1); 
            \draw[new edge] (n1_2) -- (n1_4);     
            \draw[tree edge] (n1_1) -- (n1_4);
            \draw[tree edge] (n1_4) -- (n1_3);
            
            \node[draw=none] at (0, -0.8) {Neighbor $T_{2a}$};
        \end{scope}

        \begin{scope}[shift={(7,0)}]
            \node[node] (n2_1) at (-1.5, 0) {$x_1$};
            \node[node] (n2_4) at (0, 0) {$x_4$};
            \node[node] (n2_3) at (1.5, 0) {$x_3$};
            \node[node] (n2_2) at (1.5, 1.3) {$x_2$};
            
            \draw[deleted edge, bend right=15] (n2_2) to (n2_1); 
            \draw[new edge] (n2_2) -- (n2_3);     
            \draw[tree edge] (n2_1) -- (n2_4);
            \draw[tree edge] (n2_4) -- (n2_3);
            
            \node[draw=none] at (0, -0.8) {Neighbor $T_{2b}$};
        \end{scope}

    \end{tikzpicture}
\caption{Neighborhood generation $\mathcal{N}(T_2)$ for leaf $x_2$. Removing the edge $x_2x_1$ (dotted red) and reattaching $x_2$ to vertices $x_4$ and $x_3$ yields the neighbors $T_{2a}$ and $T_{2b}$ (solid green), respectively.}
    \label{fig:neighborhood_example}
\end{figure}

To traverse this neighborhood, two distinct exploration strategies will be analyzed and compared in our computational framework: \textit{first improvement}, which immediately accepts the first neighbor that strictly increases $PR_D(T)$, and \textit{best improvement}, which evaluates the entire neighborhood $\mathcal{N}(T)$ to select the neighbor yielding the maximum objective gain.

\section{Experimental analysis}
\label{sec:experiments}

To evaluate the performance, scalability, and practical applicability of the proposed multistart framework, we conduct a comprehensive computational analysis. The experiments are divided into two main categories: first, an algorithmic benchmarking on synthetic instances to assess the individual contribution of the construction and local search strategies; second, an application to real-world biological datasets to demonstrate the framework's utility in empirical scenarios.

All algorithms were implemented in Python 3.9  and the computational experiments were executed on a dedicated machine equipped with a Intel Xeon Gold 6226R 2.90 GHz with 8 GB RAM.

\subsection{Analysis of the algorithm}
\label{subsec:analysis_algorithm}

The algorithmic components were evaluated over a controlled set of synthetic dissimilarity spaces. The instances consist of complete, symmetric dissimilarity matrices of size $n \in \{7, 10, 15, 20, 25, 30\}$. To avoid algorithmic bias, these matrices were populated using a uniform random linear generator \cite{harris2020array}, ensuring that they lack any inherent or pre-planted Robinsonian structure. This represents the most challenging scenario (the worst-case topology) for the heuristic, as the optimization landscape is highly irregular.

To facilitate comparison across varying matrix sizes, the objective function reported in this section is the normalized \Trob compatibility, defined as the ratio of valid Robinsonian paths to the maximum possible paths in a spanning tree: $\overline{PR}_D(T) = \frac{2 \cdot PR_D(T)}{n(n-1)}$. Thus, $\overline{PR}_D(T) \in [0,1]$.

The first experiment isolates the construction phase (Section \ref{subsubsec:construction}) to quantify the value of the noise-induced perturbation strategy compared to a purely deterministic exploration of exact MSTs. Table \ref{tab:construction_results} reports the computational results for the construction phase. A clear trend is that the structural quality of exact MSTs deteriorates as the instance size $n$ increases. Although the MST space is highly degenerate, averaging $79.4$ minimal trees for $n=30$, relying solely on tie-breaking rules does not generate sufficient topological diversity to capture the Robinsonian property in larger spaces. The addition of random noise addresses this issue. By perturbing the dissimilarity matrix, the ordinal ranking of the edges changes, forcing the algorithm to sample spanning trees with fundamentally different topologies. The perturbation budget was fixed at $200$ iterations, as preliminary computational tests revealed that the structural improvement exhibits asymptotic convergence beyond this threshold. As shown in the table, this strategy yields an average improvement of $5.49\%$ over the deterministic approach, providing a stronger set of initial solutions for the local search phase.

\begin{table}[htbp]
\centering
\caption{Performance comparison within the construction phase. The table contrasts the average normalized objective $\overline{PR}_D(T)$ obtained solely from exact MSTs against the enhanced pool generated using stochastic perturbations (MST + Noise).}
\label{tab:construction_results}
\begin{adjustbox}{max width=\textwidth}
\begin{tabular}{@{}lccccc@{}}
\toprule
 & \multicolumn{2}{c}{\textbf{Exact MST}} & \multicolumn{3}{c}{\textbf{MST + Noise}} \\ \cmidrule(lr){2-3} \cmidrule(l){4-6}
\textbf{Size ($n$)} & $\overline{PR}_D(T)$ & \textbf{\# Trees Explored} & $\overline{PR}_D(T)$ & \textbf{\# Trees Explored} & \textbf{Improvement (\%)} \\ \midrule
7  & 0.786 & \phantom{1}1.8  & 0.810 & 200 & 3.12 \\
10 & 0.672 & \phantom{1}5.4  & 0.719 & 200 & 6.83 \\
15 & 0.580 & 15.0 & 0.610 & 200 & 4.35 \\
20 & 0.466 & 10.2 & 0.493 & 200 & 4.49 \\
25 & 0.378 & 66.9 & 0.418 & 200 & 8.26 \\
30 & 0.349 & 79.4 & 0.379 & 200 & 5.86 \\ \midrule
\textbf{Average} & \textbf{0.538} & \textbf{29.8} & \textbf{0.572} & \textbf{200} & \textbf{5.49} \\ \bottomrule
\end{tabular}
\end{adjustbox}
\end{table}

Building upon the initial pool generated by the construction phase, which evaluates exact MSTs and applies the aforementioned $200$ noise-induced iterations, we assess the impact of the neighborhood exploration strategies: first improvement (FI) and best improvement (BI). The algorithms were executed with a strict global time budget of $t_{max} = 600$ seconds per instance. 

Table \ref{tab:aggregate_performance} summarizes the aggregate performance across all synthetic instances. The construction phase reliably provides a strong initial baseline, securing an average compatibility score of $0.573$ within approximately $11$ seconds. Applying a single iteration of the local search (LS) until a local optimum is reached significantly increases the metric across both strategies. However, FI achieves a slightly better local optimum ($0.748$) in less computational time ($266.11$ seconds) compared to BI ($0.743$ in $316.63$ seconds). This efficiency directly scales to the full multistart (MS) execution: because FI consumes less time per LS, it completes more global iterations within the 600-second budget, ultimately converging to a superior global metric ($0.751$ versus $0.744$).
\begin{table}[htbp]
\centering
\caption{Aggregate performance of the FI and BI strategies across the construction, single local search (LS), and complete multistart (MS) phases. Times are reported in seconds. The global time limit was set to 600 seconds.}
\label{tab:aggregate_performance}
\begin{tabular}{@{}lcccccc@{}}
\toprule
\textbf{Strategy} & \multicolumn{2}{c}{\textbf{Construction}} & \multicolumn{2}{c}{\textbf{Local Search}} & \multicolumn{2}{c}{\textbf{Multistart}} \\ \cmidrule(lr){2-3} \cmidrule(lr){4-5} \cmidrule(l){6-7} 
                  & $\overline{PR}_D(T)$ & Time (s)     & $\overline{PR}_D(T)$ & Time (s)                 & $\overline{PR}_D(T)$ & Time (s)             \\ \midrule
BI                & 0.573                 & 10.91       & 0.743                 & 316.63                  & 0.744                 & 600.01              \\
FI                & 0.573                 & 10.91       & \textbf{0.748}        & \textbf{266.11}         & \textbf{0.751}        & 600.01              \\ \midrule
\textbf{Average}  & 0.573                 & 10.91       & 0.746                 & 291.37                  & 0.748                 & 600.01              \\ \bottomrule
\end{tabular}
\end{table}

Analyzing the scalability of these strategies as a function of the instance size $n$ reveals a steep exponential growth in the computational time required for neighborhood exploration. While the construction time remains highly stable across all dimensions (fluctuating tightly between $10$ and $13$ seconds), the LS phase rapidly becomes the bottleneck. For smaller topologies ($n \le 10$), both FI and BI efficiently resolve the neighborhood within seconds. However, as $n$ increases, the computational overhead of evaluating the entire neighborhood $\mathcal{N}(T)$ heavily penalizes the BI strategy. For instance, at $n=15$, BI requires more than double the time of FI ($206.8$ seconds versus $81.9$ seconds). By $n=30$, the discrete search space is so vast that identifying a single local optimum exhausts the entire 600-second budget, forcing both strategies to systematically hit the timeout restriction before a full MS cycle can be completed.

\subsection{Application to real-world biological datasets}
\label{subsec:exp_real_data}

To assess the practical applicability of the proposed optimization framework, we evaluated its performance on real-world genomic datasets. In bioinformatics, estimating the evolutionary divergence between species or the transcriptomic affinity between samples inherently relies on pairwise sequence dissimilarities. Mapping these relationships into a \Trob compatible spanning tree provides a natural, hierarchical representation of the data without enforcing rigid clustering assumptions. 

\begin{figure}[ht!]
    \centering
    \includegraphics[width=0.55\textwidth]{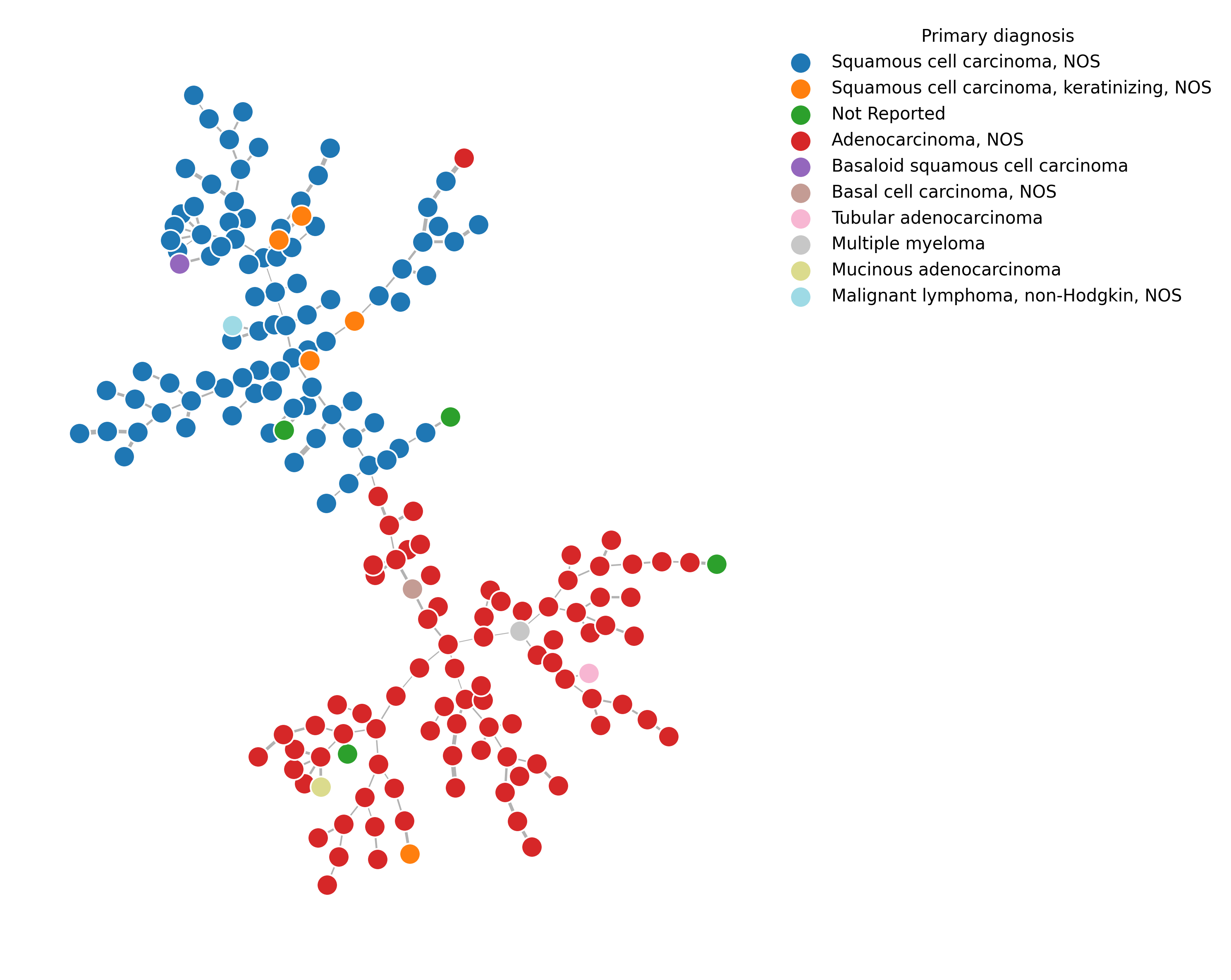}
    \vspace{0.4cm}
    
    \includegraphics[width=0.55\textwidth]{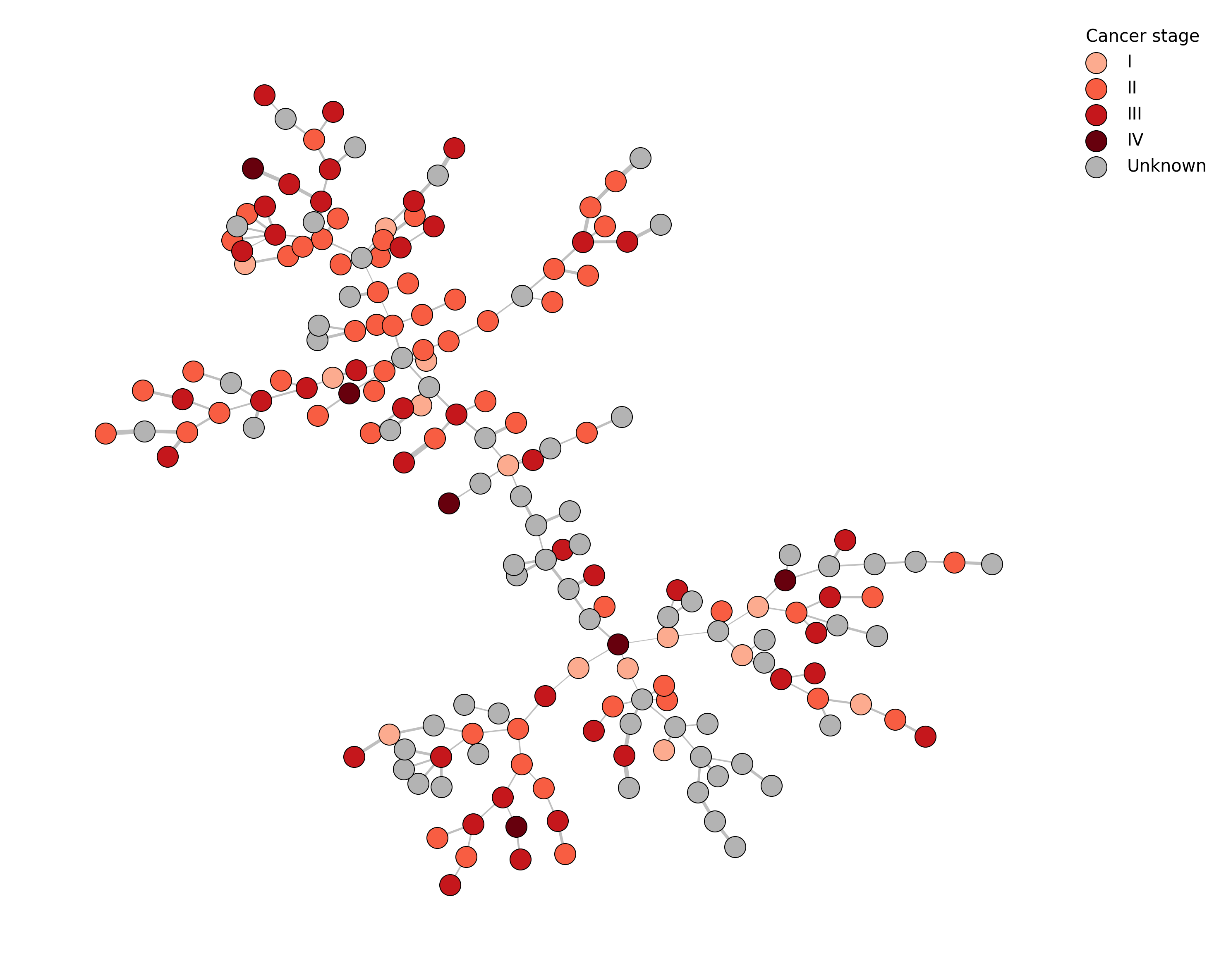}
    \vspace{0.4cm}
    
    \includegraphics[width=0.55\textwidth]{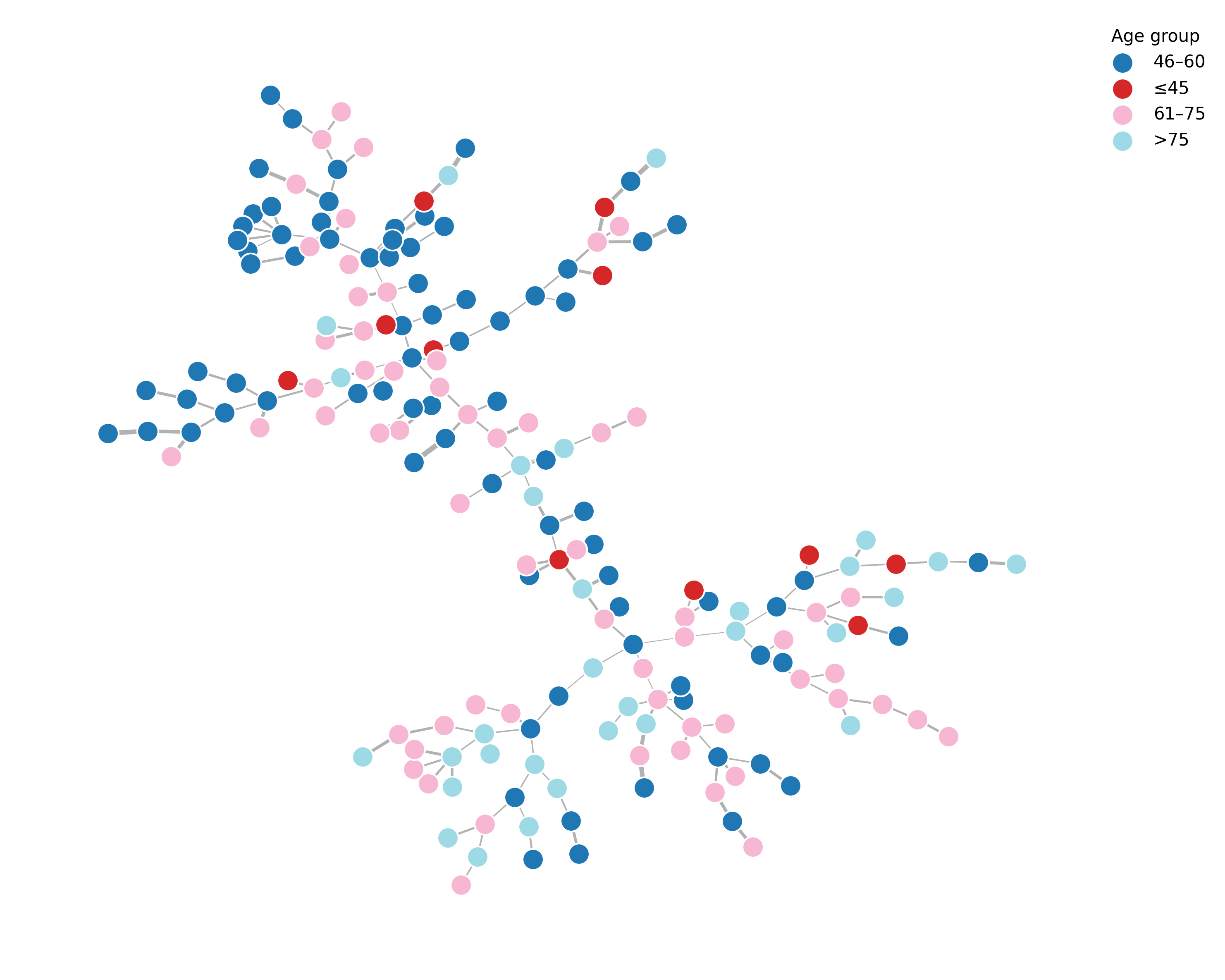}
    
    \caption{Visualization of the optimized \Trob spanning tree for the TCGA-ESCA transcriptomics dataset. The identical tree structure is color-coded by \textbf{Top:} Primary diagnosis, showing a clear transcriptomic bifurcation. \textbf{Middle:} Cancer stage, presenting a more heterogeneous distribution. \textbf{Bottom:} Patient age group, revealing distinct demographic gradients across the main branches.}
    \label{fig:tcga_esca_comparison}
\end{figure}

For our experimental evaluation, we compiled a diverse suite of biological instances categorized into three primary datasets:

\begin{enumerate}
    \item {NCBI genomes:} We utilized genomic sequences from the RefSeq and GenBank repositories \cite{sayers2025database}. To quantify structural relationships without full sequence alignment, sequence profiles were extracted using $k$-mers with a word length of $k=31$. The dissimilarity between any pair of genomes was computed using the Jaccard distance over their respective $k$-mer sets. We extracted a representative subset of $1,000$ genomes across distinct taxonomic domains (viruses, archaea, and bacteria), yielding a $1000 \times 1000$ symmetric dissimilarity matrix bounded in $[0,1]$.
    
    \item {TCGA transcriptomics:} This set comprises primary tumor samples from \textit{The Cancer Genome Atlas} (TCGA) \cite{grossman2016toward}, specifically the Esophageal Carcinoma (TCGA-ESCA, $n=184$) and Rectum Adenocarcinoma (TCGA-READ, $n=166$) cohorts. Gene expression profiles (RNA-Seq) were normalized via the Trimmed Mean of M-values (TMM) method and transformed to log2-CPM. Retaining the $1,000$ most highly variable genes, dissimilarity was defined as the complement of the Spearman correlation coefficient, $D(x,y) = 1 - \rho(x,y)$, yielding values bounded in $[0,2]$.
    
    \item {Fungal protein families:} This dataset captures pairwise dissimilarity computed as the percentage of differing positions over two aligned genome sequences for five fungal protein families (e.g., Laccases, Peroxidases, Chitinases) distributed across Polyporales and Agaricales genomes \cite{arias_soto_2026_21012487}, providing dissimilarity matrices ranging from $n=57$ to $n=126$.
\end{enumerate}

Unlike synthetic instances, real-world biological data is heavily burdened by evolutionary noise, homoplasy, and tied distances, naturally deviating from a perfect hierarchical structure. This structural degradation is immediately evident in the exact minimum spanning trees. Across all evaluated biological instances, the exact MSTs yielded a low average \Trob compatibility of $\overline{PR}_D(T) \approx 0.136$.  Under a restricted 1-hour budget, the heuristic consistently optimized the tree topologies, elevating the average structural metric to $\overline{PR}_D(T) \approx 0.224$. This represents a substantial relative improvement of over $65.10\%$ compared to the deterministic MST baseline.

Next, to visually corroborate that maximizing the \Trob property inherently captures meaningful biological relationships, we mapped sample metadata onto the optimized topologies. Figure \ref{fig:tcga_esca_comparison} illustrates the optimized \Trob spanning tree for the TCGA-ESCA cohort, incorporating primary diagnosis, cancer stage, and patient age group.

In the top panel, nodes are color-coded according to their primary diagnosis. As observed, minimizing the \Trob violations forces the tree topology to naturally group similar transcriptomic profiles into distinct, contiguous branches, effectively organizing the tumors by their underlying genetic profile without requiring predefined clustering algorithms. Most notably, a clear binary division emerges within the structure, accurately segregating squamous cell carcinomas (upper branch) from adenocarcinomas (lower branch). Furthermore, four cases lacking a reported primary diagnosis (green nodes) are seamlessly integrated into these established clusters based purely on their transcriptomic dissimilarities, demonstrating the algorithm's capacity to uncover hidden biological relationships and properly classify unannotated elements.

The middle and bottom panels display the identical tree structure color-coded by clinical variables. While the cancer stage (middle panel) exhibits a scattered, non-contiguous distribution—confirming that the \Trob formulation strictly adheres to latent genetic similarities rather than arbitrary clinical staging—the age group mapping (bottom panel) reveals a noteworthy macroscopic trend. Patients aged 46 to 60 years are predominantly localized within the upper branch, coinciding with the squamous cell carcinoma cluster, whereas older patients ($\geq$ 61 years) are mostly concentrated in the lower adenocarcinoma branch. This underlying spatial distribution accurately reflects the distinct demographic characteristics inherently associated with these two major esophageal cancer subtypes.

\section{Future work}
\label{sec:conclusions}

This work opens several promising avenues for future research. From an algorithmic perspective, an important open question is whether the recognition of \Trob spaces can be performed more efficiently. Our recognition algorithm relies on the enumeration of minimum spanning trees, and reducing or eliminating this dependency remains a challenging problem. In particular, it would be interesting to investigate whether the techniques recently introduced by Préa~\cite{prea2026tree} for recognizing strong \Trob\ spaces can be adapted to obtain a faster recognition algorithm for the general case.

A second natural direction concerns the optimization version of the problem. Given a dissimilarity space, one may seek a spanning tree that maximizes the number of Robinson paths. The computational complexity of this problem is currently unknown, and it remains to determine whether it is NP-hard or admits a polynomial-time algorithm. More generally, establishing theoretical approximation guarantees or developing exact algorithms for this optimization problem would provide stronger performance guarantees for practical applications.

Another direction is extending the framework to incomplete dissimilarity spaces, where some pairwise dissimilarities are missing. Such an extension is relevant in view of the increasing availability of large-scale relational datasets, especially in bioinformatics, where missing data naturally arise. Finally, it would be interesting to investigate further the applicability of \Trob spaces to other types of real-world data and to assess their usefulness in a broader range of data analysis. 




\end{document}